\newcommand*{\mb}{\bm}
\newcommand*{\mbb}{\mathbb}
\newcommand*{\mc}{\mathcal}
\newcommand*{\la}{\langle}
\newcommand*{\ra}{\rangle}
\newcommand{\tr}{\mathsf{T}}
\theoremstyle{definition} 
\theoremstyle{definition} 
\theoremstyle{definition} 
\theoremstyle{definition} 
\theoremstyle{definition} 
\theoremstyle{definition} \newtheorem{definition}{Definition}
\theoremstyle{definition} 
\theoremstyle{definition} \newtheorem{lemma}{Lemma}
\theoremstyle{definition} \newtheorem{theorem}{Theorem}
\theoremstyle{definition} 
\theoremstyle{definition} 
\theoremstyle{definition}\newtheorem{proposition}{Proposition}
\theoremstyle{definition} 
\theoremstyle{definition} 
\theoremstyle{definition} \newtheorem{assumption}{Assumption}
\theoremstyle{definition} 
\theoremstyle{definition} \newtheorem{fact}{Fact}
\theoremstyle{definition} 
\theoremstyle{definition}
\long\def\symbolfootnote[#1]#2{\begingroup%
\def\thefootnote{\fnsymbol{footnote}}\footnote[#1]{#2}\endgroup}
\renewcommand{\max}{\operatornamewithlimits{max}}
\newcommand{\be}{\begin{equation}}
\newcommand{\ee}{\end{equation}}
\newcommand{\bes}{\begin{equation*}}
\newcommand{\ees}{\end{equation*}}
\newcommand{\rdots}{\mathinner{%
  \mkern1mu\raise1pt\hbox{.}%
  \mkern2mu\raise4pt\hbox{.}%
  \mkern2mu\raise7pt\vbox{\kern7pt\hbox{.}}\mkern1mu}}
\begin{document}

\title[Discord and Harmony in Networks]{Discord and Harmony in Networks}

\author[]{Andrea Galeotti \and Benjamin Golub \and Sanjeev Goyal \and Rithvik Rao}

\date{\today}

\thanks{%
Joey Feffer and Zo\"{e} Hitzig provided exceptional research assistance. Andrea Galeotti gratefully acknowledges financial support from the European Research Council through the ERC-consolidator grant (award no. 724356) and the European University Institute through the Internal Research Grant. Benjamin Golub gratefully acknowledges financial support from The Pershing Square Fund for Research on the Foundations of Human Behavior and the National Science Foundation (SES-1658940, SES-1629446).
Galeotti: Department of Economics, London Business School, agaleotti@london.edu.
Golub:  Departments of Economics and Computer Science, Northwestern University, benjamin.golub@northwestern.edu.
Goyal: Faculty of Economics and Christ's College, University of Cambridge, sg472@cam.ac.uk.
Rao: School of Engineering and Applied Sciences, Harvard University, rithvikrao@college.harvard.edu}

\maketitle

\begin{abstract}
Consider a coordination game played on a network, where  agents prefer taking actions closer to those of their neighbors and to their own ideal points in action space. We explore how the welfare outcomes of a coordination game depend on network structure and the distribution of ideal points throughout the network. To this end, we imagine a benevolent or adversarial planner who intervenes, at a cost, to change ideal points in order to maximize or minimize utilitarian welfare subject to a constraint. A complete characterization of optimal interventions is obtained by decomposing interventions into principal components of the network's adjacency matrix. Welfare is most sensitive to interventions proportional to the last principal component, which focus on \emph{local} disagreement. A welfare-maximizing planner optimally works to reduce local disagreement, bringing the ideal points of neighbors closer together, whereas a malevolent adversary optimally drives neighbors' ideal points apart to decrease welfare. Such welfare-maximizing/minimizing interventions are very different from ones that would be done to change some traditional measures of discord, such as the cross-sectional variation of equilibrium actions. In fact, an adversary sowing disagreement to \emph{maximize} her impact on welfare will \emph{minimize} her impact on global variation in equilibrium actions, underscoring a tension between improving welfare and increasing global cohesion of equilibrium behavior. 
\end{abstract}
\newpage

\section{Introduction}

Consider a simple coordination game played on a network. Each player takes an action, having an incentive to bring this action closer to both a personal \emph{ideal point} and to the actions of neighbors. In the absence of any coordination concerns, each player would  set their actions equal to their ideal points; we thus also call an ideal point a \emph{favorite action}. Coordination concerns typically change this, pulling an agent's choices in equilibrium toward the ideal points of network neighbors, as well as of those farther away with whom the agent interacts only indirectly. A number of examples motivate our setup. The action may be declaring political opinions or values in a setting where it is costly to disagree with friends, but also costly to distort one's true position from the ideal point of sincere opinion. Alternatively, an action might be a choice in a technological space. For instance, in a software company, designer preferences inform tradeoffs between usability and power in the tools they use, but all are better off when their tools are more compatible with those of their colleagues.\footnote{This interpretation of actions as choices in a technological space aligns with standard models in the literature on organizations---see, e.g., \citet*{calvo2015communication}.} In this example, the network is determined by collaboration relationships, i.e.\ which designers work together.\footnote{In these examples, and throughout, we take the network to be exogenous to the decisions in question, which is often realistic in the short run. Endogenous network formation is, as always, an important concern.}

The broad question we are concerned with is how the favorite actions and the network jointly determine welfare. Given a network, how do changes in agents' ideal points affect the efficiency of equilibrium outcomes? When can relatively small changes in these ideal points have large welfare impacts? We operationalize this question by imagining a planner who can, at a cost, change favorite actions. Supposing an adversary can undertake costly influence activities and change people's views, how would she do so if her goal was to increase miscoordination? Turning to the organization example, if managers can exert influence, provide encouragement, and offer incentives to change agents' inclinations, what changes would a benevolent manager undertake to maximize welfare? By understanding what such planners would do, we can understand how the relationship between favorite points and the network determines welfare. Such insights will also be relevant for problems concerning the composition of a team; rather than directly manipulating a particular person's incentives, a planner may instead choose \emph{whom} to put in a certain organizational role or position. Such interventions require careful analysis of the welfare implications of the joint arrangement of ideal points and network links. Our results shed light on these issues.

To analyze this intervention problem, we take a spectral approach. That is, we write the relevant optimization problems in terms of functions of eigenvalues and eigenvectors of the network, which are important invariants often used to capture various aspects of network structure. Working in a ``principal component'' basis permits legible characterizations of equilibrium outcomes and optimal interventions. (In contrast, in a natural basis the solutions to our optimization problems would be unwieldy and would not shed much light on the relationship between structural features of the network and the optimal intervention.)  Our main findings are characterizations of the optimal intervention using certain eigenvectors and substantive implications for what such a planner focuses on.

Our main result, Theorem \ref{thm:main}, is that the most welfare-consequential changes in favorite actions focus primarily (in a sense we make precise) on the \emph{last} eigenvector of the network: the one associated with its lowest (typically most negative) eigenvalue. Beyond this, there is a monotonicity to the structure of interventions: principal components with lower eigenvalues receive less focus in optimal interventions. In special cases that we describe, the focus on the lowest principal component can be exclusive: at the optimal intervention, all disagreement in favorite actions is loaded onto this one principal component.\footnote{We will use \textit{principal components} and \textit{eigenvectors} interchangeably.} Our results also imply that explicit functions of certain eigenvalues can summarize the sensitivity of equilibrium welfare to optimal perturbations of ideal points. This gives an answer to the question posed at the beginning about how sensitive welfare is to the configuration of ideal points.

Going beyond a characterization in terms of a canonical graph statistic, we interpret the implications in terms of more intuitive aspects of graph structure. A key distinction we emphasize is between \emph{local discord}---creating disagreement\footnote{Throughout, we use ``disagreement'' to refer to differences in actions across the network.} at the ``street level,'' between neighbors---and \emph{global discord}---which creates disagreement between separate regions. Our result implies that optimal---i.e., welfare-maximizing or minimizing---interventions have a very local focus in a precise sense. An adversary seeks to amplify disagreement between neighbors, pushing neighbors' favorite points apart.

Notably, the interventions that best achieve this are quite distinct from those that best create global discord in the network. Indeed, creating global discord is in tension with reducing welfare. When an adversary optimally sows discord in ideal points to reduce welfare, this leads to a low level of variation across the population in equilibrium behavior, in a sense we make precise. Relatedly, if there is a certain amount of diversity (cross-sectional variation) in ideal points, it turns out that agents are best off when they are arranged so that they agree with their immediate neighbors and disagree with those distant from them in the network. This naturally leads to societies sustaining more diversity in equilibrium behavior and appearing more divided in a global sense. To summarize, our main results deliver stark predictions about which aspects of the configuration of ideal points matter for welfare, and these are quite different from what we might expect from standard intuitions about discord (as we elaborate on in our discussion of related literature below). 

Finally, a conceptual point in our analysis is that intervention problems can be useful metaphors for understanding what structural features matter for welfare in a game. In some cases, a planner may indeed be intervening quite explicitly. For instance, an adversary may be seeking to use social media to sow division in opinions and cause costly tensions between neighbors.\footnote{See \citet{House} for a report on such activities.} But in many other problems, an analyst may simply be interested in understanding which shifts in exogenous primitives most affect welfare; hypothetical intervention problems shed light on this even when an intervention is not literally being designed.

\subsection{Related work}

Broadly, we are situated in the economics literature on network games, surveyed, for example by \citet*{JZ14}; see also the bibliography of \citet*{BKD14}.  This literature, in terms of techniques and many of the measures that are relevant, is also related to the literature on opinion updating and social learning in networks, going back to \citet{DeGroot74} and surveyed by \citet*{acemoglu2011opinion} and \citet*{GS16}.

Within this broad literature, our project is distinguished by two aspects of our substantive focus. First, we are interested in a welfare objective. While most works in the economics literature on network games of course touch on efficiency and welfare considerations, the main outcome of interest is often an overall level of activity or knowledge---as, for instance, in \citet*{BCAZ06} and \citet*{KKT15}.\footnote{Spectral methods play a significant role in the study of global influence, which is closely connected to the Perron vector (\emph{eigenvector centrality}), as in \citet*{BCAZ06, ACOTS12}. Different eigenvectors matter in our analysis because we are not concerned with first moments of behavior but rather variation and disagreement across agents.} There are fewer that are focused on social welfare. An early contribution, with a price of anarchy approach, is \citet*{BKO11}, who give bounds on the welfare difference between equilibrium and a social optimum under the \citet*{FJ99} social learning model. Another closely related contribution is due to \citet*{AP07}, who study fundamental structural properties of equilibrium welfare in beauty contests among other classes of games. In macroeconomics, the welfare implications of shocks are studied by \citet{baqaee2019macroeconomic} and \citet{baqaee2020productivity}. \citet*{GGG20} and \citet*{AK20} are perhaps the closest in that they consider  welfare-optimal interventions.\footnote{Targeting of interventions more broadly is studied, e.g., in \citet*{AJB00, Valente12, KKT15}.} However, the class of games considered is very different: investment or public goods games. These involve quite different externalities from the ones that are relevant for coordination games and discord, which is what we focus on (as discussed by \citet*{AP07}).\footnote{\citet*{BKD14} focus on stability of equilibrium rather than targeting, but find that eigenvectors related to the ones we study matter in public goods games.}

Issues of miscoordination and discord are touched on in another thread of literature. This work analyzes how the configuration of agents' attributes (initial opinions, ideal points, etc.) affects the dynamics and ultimate outcomes of processes in social networks. The connection of these outcomes to spectral aspects of the network was noted by \citet*{DVZ03}, and further developed by \citet{GJ12}, which highlighted the relation to spectral clustering.\footnote{For more on various segregation measures that come up in various related contexts, see \citet*{Morris00, DVZ03, CJP09, GJ12, ST07}.} An important recent contribution on discord is \citet*{GKT20}, which studies maximizing and minimizing particular measures of discord in \citet{FJ99} updating processes (which, mathematically, are closely related to our games). Crucially, in all these projects, the notion of discord that is of interest is a particular, exogenously given measure, rather than welfare in the game. Criteria of interest include the duration of disagreement in an updating process, average disagreement across individuals in the network, etc. In our work, we get the objective from the preferences of the players themselves, maximizing utilitarian welfare. Thus, while the principal component approach overlaps methodologically with many of these studies, the welfare-oriented questions we ask lead to insights quite different from those in the prior literature. Indeed, a theme in the prior literature is that \emph{global} discord between loosely connected regions is most important in slowing down agreement \citep*{DVZ03,GJ12}. The component of disagreement that most strongly remains after a long period of updating opinions is proportional to the second eigenvector. As we will show, our results deliver a starkly different message. The classical spectral cut component---the second eigenvector that partitions the network into pieces that are relatively loosely connected to each other---is the \emph{least} consequential for welfare in our setting. \citet*{GKT20} has more subtle results showing that there is no clear ordering of how an adversary focuses effort on various spectral components of disagreement. This is natural in view of their wider class of objectives. We show that for standard welfare-oriented objectives in coordination games, there is a clear ordering, with the last eigenvector being of primary importance. Finally, our Theorem \ref{thm:main} imposes less structure on the class of possible interventions than, e.g., \citet*{GJ12} or \citet*{GKT20}; we allow perturbations around an arbitrary status quo and, for small interventions, can deal with a large class of intervention cost functions.

\section{Model, Basic Facts, and Definitions}

In this section, we state the model and definitions we need. We also mention some standard results on the structure of equilibrium that serve as a foundation for our subsequent results.

\subsection{Coordination game}

We consider a one-shot game played between individuals $\mc{N} = \{1 , \dots, n \}$, with a typical individual denoted $i$. Each individual takes an \emph{action}\footnote{The one-dimensional space is for simplicity: our analysis extends without much change to actions in an arbitrary Euclidean space.} $a_i \in \mbb{R}$. We are given a \emph{favorite action} $f_i \in \mbb{R}$ for each agent and a network with a weighted adjacency matrix $\mb{G} \in \mbb{R}^{n \times n}$. An agent's payoff is determined by her favorite action and the actions of her neighbors in $\bm{G}$. We write the vector of actions as $\mb{a} \in \mbb{R}^n$, and the vector of favorite actions as $\mb{f} \in \mbb{R}^n$. Individual $i$ chooses $a_i$, while $\mb{f}$ and $\mb{G}$ are exogenous.

We will assume that $\bm{G}$ is row-stochastic and symmetric, and that each $i$ meets and interacts with $j$ with probability $g_{ij}$. The payoff to an agent $i$ of interacting with agent $j$ is given by:
\begin{equation}\label{eqn:payoff}
    v_i (a_i, a_j) = - \underbrace{\beta(a_i - a_j)^2}_{\textrm{miscoordination}} - \underbrace{(1 - \beta) (a_i - f_i)^2}_{\textrm{distance from favorite action}},
\end{equation}
where $\beta \in [0, 1)$ determines the relative payoff weight of miscoordination with other individuals and distance from an individual's favorite action. The expected payoff of individual $i$ given action profile $\mb{a}$ is
\begin{equation*}
   V_i(\bm{a}) = \sum_j g_{ij} v_i (a_i, a_j).
\end{equation*} Utilitarian welfare is defined by $$ V(\bm{a}) = \sum_i V_i(\bm{a}). $$

\subsection{Nash equilibrium: A formula and a few basic properties}

Here we review a few standard facts about the Nash equilibrium.

Fixing $\mb{f}$ and $\mb{G}$, the first-order condition characterizing the Nash equilibrium action profile is given by
\begin{equation*}
    a_i^* = \beta \sum_j g_{ij} a_j^* + (1-\beta) f_i,
\end{equation*}
and this can be rewritten in vector notation to show that any Nash equilibrium action profile $\mb{a}^*$ must satisfy
\begin{equation}\label{eqn:nashEqm}
    \mb{a^*} = (1-\beta)(\mb{I} - \beta\mb{G})^{-1} \mb{f}.
\end{equation}

We make the following two assumptions, the first of which has already been mentioned above.
\begin{assumption}\label{asm:symmetric}
    The adjacency matrix $\mb{G}$ is row-stochastic and symmetric.
\end{assumption}

Assumption \ref{asm:symmetric} is implied by the description of $\bm{G}$ as meeting probabilities. It implies that the largest eigenvalue of $\mb{G}$ is 1 and ensures that (\ref{eqn:nashEqm}) characterizes a unique, stable Nash equilibrium \citep*{BCAZ06,BKD14}. Indeed, we have the following fact:
\begin{fact}  The game has a unique Nash equilibrium, which is in pure strategies and given by (\ref{eqn:nashEqm}). In this equilibrium, each $a_i$ is a (possibly different) weighted average of the $f_j$. \label{fact:Nash}
\end{fact}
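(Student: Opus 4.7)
The plan is to derive the formula from first-order conditions, establish uniqueness via strict concavity of individual payoffs together with invertibility of $\bm{I} - \beta \bm{G}$ (using Assumption \ref{asm:symmetric} to control the spectrum), and finally verify the convex combination property via a Neumann series expansion.

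First, I would expand $V_i(\bm{a})$ explicitly. Using $\sum_j g_{ij} = 1$ (row-stochasticity), the distance-from-favorite summand collapses to $-(1-\beta)(a_i - f_i)^2$, while the miscoordination piece remains $-\beta \sum_j g_{ij}(a_i - a_j)^2$. Because $\beta \in [0,1)$, the coefficient on $(a_i - f_i)^2$ is strictly negative, so $V_i$ is strictly concave in $a_i$ for every fixed $\bm{a}_{-i}$. Consequently, each best response is unique and determined by setting $\partial V_i / \partial a_i = 0$, which after simplification gives the scalar FOC displayed in the text and, stacked into vector form, the linear system $(\bm{I} - \beta \bm{G}) \bm{a} = (1-\beta)\bm{f}$. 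Any Nash equilibrium is necessarily pure and must satisfy this system.

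Second, I would invoke Assumption \ref{asm:symmetric}. Since $\bm{G}$ is symmetric, its eigenvalues are real; since it is row-stochastic with nonnegative entries, its spectral radius equals $1$ with eigenvector $\bm{1}$. Thus the eigenvalues of $\bm{I} - \beta \bm{G}$ lie in $[1-\beta, \, 1+\beta] \subset (0, \infty)$, so $\bm{I} - \beta \bm{G}$ is positive definite and invertible. This yields the unique solution $\bm{a}^* = (1-\beta)(\bm{I} - \beta\bm{G})^{-1}\bm{f}$ to the FOC system, which, combined with strict concavity in own actions, gives a unique Nash equilibrium.

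For the weighted-average property, I would expand $(\bm{I} - \beta \bm{G})^{-1}$ as the Neumann series $\sum_{k \geq 0} \beta^k \bm{G}^k$, which converges because $\|\beta \bm{G}\| = \beta < 1$ in spectral norm, as just established. Each $\bm{G}^k$ has nonnegative entries (being a power of an entrywise-nonnegative matrix) and satisfies $\bm{G}^k \bm{1} = \bm{1}$ (powers of a row-stochastic matrix are row-stochastic). Summing termwise, $(\bm{I} - \beta \bm{G})^{-1}$ has nonnegative entries, and $(\bm{I} - \beta \bm{G})^{-1}\bm{1} = (1-\beta)^{-1}\bm{1}$. Therefore the matrix $\bm{W} := (1-\beta)(\bm{I} - \beta \bm{G})^{-1}$ is itself row-stochastic, and $a_i^* = \sum_j w_{ij} f_j$ expresses each equilibrium action as a (generically row-dependent) convex combination of favorite actions. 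The only delicate step is the spectral bound needed for convergence of the Neumann series, and that is supplied directly by Assumption \ref{asm:symmetric}.
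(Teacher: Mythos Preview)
Your proof is correct and follows essentially the same route as the paper: first-order conditions with concavity give the linear system, Assumption \ref{asm:symmetric} guarantees invertibility via the spectral bound, and the Neumann expansion shows that $\bm{W}=(1-\beta)(\bm{I}-\beta\bm{G})^{-1}$ is row-stochastic. You supply more detail on the concavity and spectral steps than the paper does, but the argument is the same.
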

\begin{proof} It is straightforward to check that the second-order conditions for optimization hold, so the first-order condition is necessary and sufficient. Assumption \ref{asm:symmetric} ensures $\beta\bm{G}$ has spectral radius less than $1$ and so we may rewrite (\ref{eqn:nashEqm}) by the Neumann series as \begin{equation} \label{eq:Neumann}\bm{a}^* = \underbrace{\left(\sum_{t=0}^\infty (1-\beta){\beta^t} \bm{G}^t \right)}_{\bm{W}} \bm{f}.\end{equation} Letting $\bm{W}$ be the matrix in parentheses, we see that it is a weighted average (with weights $(1-\beta){\beta^t}$) of stochastic matrices $\bm{G}^t$, so $\bm{W}$ is itself stochastic. Thus, $a_i=\bm{W}_{i \bullet} \bm{f}$, where $\bm{W}_{i \bullet}$ is row $i$ of $\bm{W}$. \end{proof}

\begin{figure}
    \centering
    \includegraphics[width=0.8\textwidth]{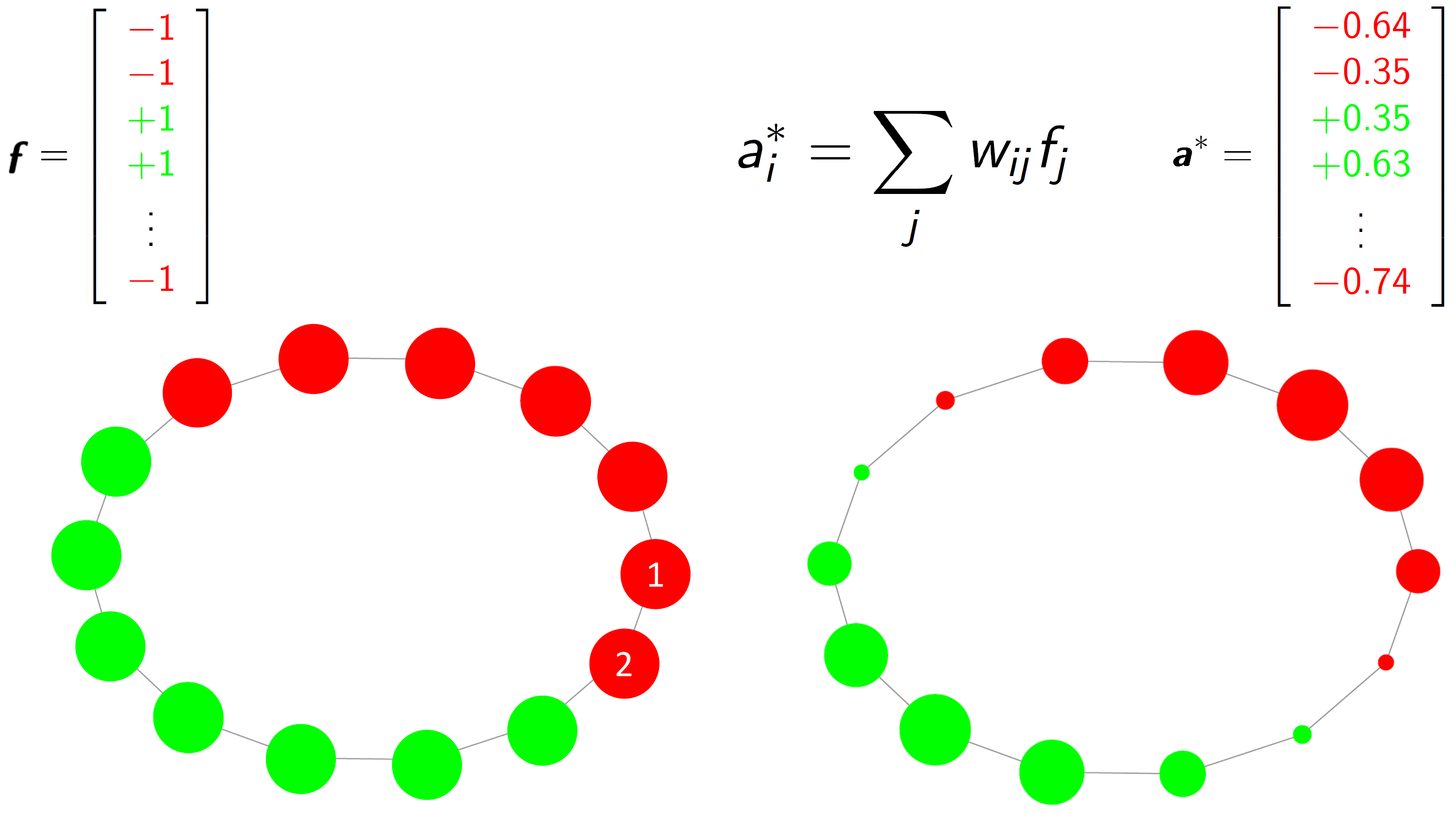}
    \caption{An illustration of equilibrium for a given network (the circle,  where $g_{ij}=0.5$ whenever $i$ and $j$ are adjacent). The node labels are as shown, continuing clockwise.  On the left we depict a particular vector $\bm{f}$. When we depict a node-indexed vector (such as $\bm{f}$ or $\bm{a}$) visually, our convention is that entries with positive sign are indicated by making the node green, while entries with negative sign are indicated by making the node red. The size of each node corresponds to the magnitude of its entry. On the left side we write and illustrate $\bm{f}$, while on the right side we calculate $\bm{a}^*$ using Fact \ref{fact:Nash} and illustrate it in the same type of diagram.}
    \label{fig:example-eqm}
\end{figure}

To illustrate the implications of Fact \ref{fact:Nash}, consider Figure \ref{fig:example-eqm}. There, we take a particular vector $\bm{f}$ where half of the agents (those in the bottom left) have a favorite action of $+1$, while those in the top right have a favorite action of $-1$. We calculate equilibrium using (\ref{eqn:nashEqm}) for a particular value of $\beta$. We can then see the structure of equilibrium asserted in Fact \ref{fact:Nash}: everyone's action is a weighted average of $+1$'s and $-1$'s, with closer agents weighted more and farther agents weighted less.\footnote{Note also from the form of (\ref{eq:Neumann}) that the Nash equilibrium can be seen as the average of $\bm{G}^t \bm{f}$ for $t\in\{0,1,2,\ldots\}$, which are the outcomes of \citet{DeGroot74} or \citet{FJ99} learning or myopic updating at various times $t$; see \citet{GJ12}. This explains the close connection between properties of equilibria in network games and the dynamics of certain updating/learning processes in networks; see also \citet*{GKT20}.}

While the favorite actions exhibit a very stark difference between groups, the equilibrium actions \emph{attenuate} the diversity of favorite actions. ``Boundary'' agents average together roughly as many $+1$'s as $-1$'s, and end up with equilibrium actions close to $0$. They are quite far from their favorite actions, though they are fairly closely coordinated with their neighbors. Agents deep in the bottom left or top right end up with actions that are much more extreme, and therefore closer to their ideal points. Even they, however, end up with actions less extreme than the extremes of $\bm{f}$, illustrating the attenuation property of best responses.\footnote{This can be seen by noting from (\ref{eq:Neumann}) that in a connected graph, each agent puts positive weight on all others, and thus even the most extreme agents become less extreme. The higher $\beta$ is, the stronger the attenuation.}

We will make a final, technical, assumption to simplify the statement of some results. This holds generically (over the choice of weights in the symmetric matrix).
\begin{assumption}\label{asm:distinct}
    All eigenvalues of $\mb{G}$ are distinct. 
\end{assumption}

\subsection{Planner interventions and objective}

Our main interest is in understanding how, in examples such as the one just discussed, welfare is affected by changes in the favorite actions of various players.
We investigate this question by considering a planner who can modify the vector of favorite actions: the favorite actions $\hat{\mb{f}}$ are modified by some perturbation vector $\bm{\delta} \in \mbb{R}^n$. Formally, the planner's problem is given by
\begin{equation}\label{eqn:genPlanner}
    \begin{aligned}
    \max_{\bm{\delta}} \quad & \gamma V(\bm{a}^*)\\
    \textrm{s.t.} \quad & \mb{f} = \hat{\mb{f}} + \bm{\delta}\\
    & \mb{a^*} = (1-\beta)(\mb{I} - \beta\mb{G})^{-1} \mb{f},\\
    & c(\bm{\delta}) \leq C.
\end{aligned}
\end{equation}

The parameter $\gamma$ scaling the objective is $+1$ or $-1$, corresponding to the planner being benevolent or malevolent, respectively. The constraint $c(\bm{\delta}) \leq C$ limits the feasible interventions. The cost function $c(\cdot)$ is for now taken to be arbitrary. For various results, we will give specific cost functions: for example, constraining interventions to a ball of fixed size around the status quo. In our most general results in Section \ref{subsec:cost}, we study classes of cost functions satisfying certain assumptions, such as that interventions have (at least locally) convex costs. The number $C\geq0$ is called the \emph{budget}.

\subsection{Principal components: Definitions and notation}

We introduce notation for the key objects that play a role in our approach: the principal components of the network $\mb{G}$. We write the spectral decomposition of $\mb{G}$ as follows:
\begin{equation}\label{eqn:eigendecomp}
    \bm{G}=\underbrace{\left[\begin{array}{ccc}
    \mid & & \mid \\
    \mb{u}^{1} & \ldots & \mb{u}^{n} \\
    \mid & & \mid
    \end{array}\right]}_{\mb{U}: \text { eigenvectors }} ~~\underbrace{\left[\begin{array}{ccc}
    \lambda_{1} & & 0 \\
     & \ddots & \\
    0 & &  \lambda_{n}
    \end{array}\right]}_{\mb{\Lambda}: \text { eigenvalues }} ~~\underbrace{\left[\begin{array}{ccc}
    \text{\textemdash} & \left(\mb{u}^{1}\right)^{\intercal} & \text{\textemdash} \\
     & \vdots & \\
    \text{\textemdash} & \left(\mb{u}^{n}\right)^{\intercal} & \text{\textemdash}
    \end{array}\right]}_{\mb{U}^{\intercal}: \text { eigenvectors }}.
\end{equation}

\begin{figure}
    \centering
    \includegraphics[width=0.8\textwidth]{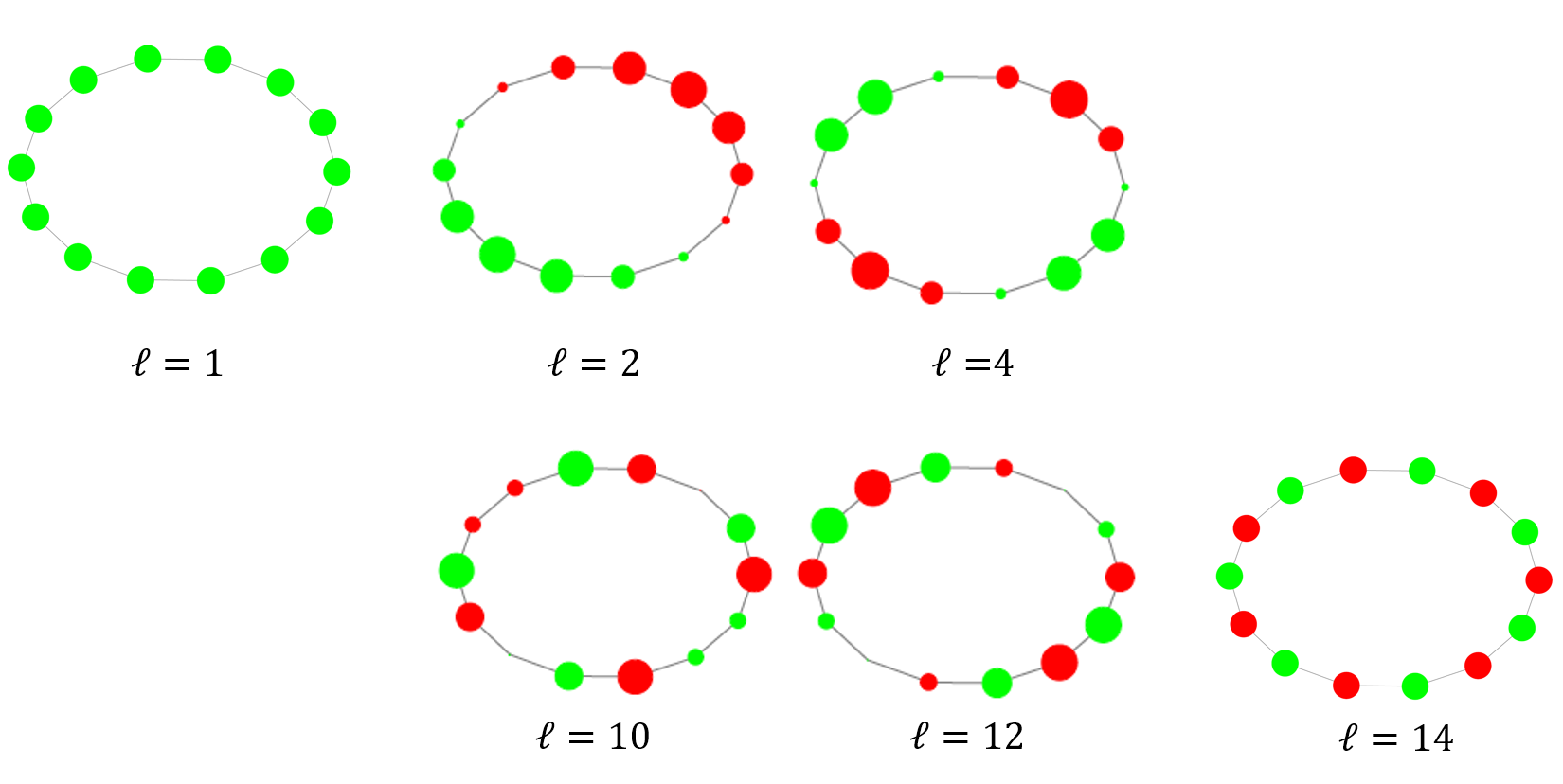}
    \caption{Six eigenvectors of a circle network. The eigenvector $\bm{u}^\ell$ corresponding to the $\ell^{\text{th}}$-highest eigenvalue $\lambda_\ell$, is depicted using the same visual convention we introduced in Figure \ref{fig:example-eqm}. Note that the eigenvectors higher eigenvalues (higher $\ell$) vary ``more slowly'' over the circle than those with lower eigenvalues (higher $\ell$).}
    \label{fig:spectrum}
\end{figure}

Here, $\mb{U}$ gives an {orthonormal basis} of eigenvectors. We adopt the convention that the eigenvectors and eigenvalues are arranged so that $\lambda_1 \geq \lambda_2 \geq \dots \geq \lambda_n $. We will refer to the eigenvector corresponding to $\lambda_{\ell}$ as $\bm{u}^{\ell}$. For any vector $\bm{z} \in \mathbb{R}^n$, let $\underline{\bm{z}}=\bm{U}^\tr \bm{z}$. We will refer to $\underline{z}_\ell$ as the projection of $\bm{z}$ onto the $\ell^{\text{th}}$ principal component, or the magnitude of $\bm{z}$ in that component.

Figure \ref{fig:spectrum} illustrates some principal components of an example network.

Throughout, we use $\langle \bm{y}, \bm{z} \rangle = \sum_{i \in \mathcal{N}} y_i z_i$ to denote the Euclidean dot product, and we let $\Vert \bm{z} \Vert = \langle \bm{z}, \bm{z} \rangle^{1/2}$ denote the Euclidean norm. Since the eigenvectors are normalized, they satisfy $\Vert \bm{z} \Vert = 1$.

\section{Two Simple Planner Problems and Two Distinguished Principal Components}

Certain principal components will play an important role in our analysis. For instance, the \emph{last} principal component, the eigenvector $\bm{u}^n$ corresponding to the lowest eigenvalue $\lambda_n$ will correspond to the direction in which interventions are most consequential for welfare. It will also be helpful to contrast it with another eigenvector, $\bm{u}^2$, the one that corresponds to the second-highest eigenvalue $\lambda_2$. This eigenvector, which has been important in prior studies of segregation and homophily \citep*{DVZ03,GJ12}, turns out to describe \emph{least} welfare-consequential interventions, and so it will serve as an important foil or contrast for some of our results.

To show the role these eigenvectors play in optimization problems, we define a special case of the planner's problem, in which the planner chooses any $\mb{f}$ on a sphere of radius 1 to maximize or minimize welfare. This corresponds to holding the cross-sectional variation of favorite actions fixed, and distributing a ``fixed'' amount of disagreement to achieve the objective. In this section, we dispense with $\bm{\delta}$ and work with choosing the vector $\bm{f}$ directly, since the simplicity of the problem makes this change straightforward. Thus, we can simply consider how the planner decides to allocate disagreement in her choice of $\bm{f}$, subject to a norm constraint.

The optimization problem of interest is defined by
\begin{equation} \label{eq:simpleplanner}
    \begin{aligned}
    \max_{\mb{f}} \quad & \gamma V(\bm{a}^*)\\
    \textrm{s.t.} \quad  & \mb{a^*} = (1-\beta)(\mb{I} - \beta\mb{G})^{-1} \mb{f} \\
  & \left\Vert \mb{f} \right\Vert = 1. \\
\end{aligned}
\end{equation}

\begin{proposition}\label{prop:toyResults}
    Fixing $\beta$, there is an increasing function $\zeta: \mbb{R} \to \mbb{R}$ such that:
    \begin{enumerate}
        \item The optimum of (\ref{eq:simpleplanner}) for the malevolent planner ($\gamma=-1$) is achieved by $\mb{f}^* = \mb{u}^n$ and is equal to $\zeta(\lambda_n)$.
        \item The optimum of (\ref{eq:simpleplanner}) for the benevolent planner ($\gamma=1$) is achieved by $\mb{f}^* = \mb{u}^2$ and is equal to $\zeta(\lambda_2)$.
    \end{enumerate}
\end{proposition}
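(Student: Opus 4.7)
The plan is to diagonalize the objective in the eigenbasis of $\bm{G}$, reducing the problem to a choice among the coordinates $\underline{f}_\ell$. Writing $\bm{f} = \sum_\ell \underline{f}_\ell\, \bm{u}^\ell$, the constraint becomes $\sum_\ell \underline{f}_\ell^2 = 1$. Each $\bm{u}^\ell$ is also an eigenvector of $(\bm{I}-\beta\bm{G})^{-1}$, with eigenvalue $1/(1-\beta\lambda_\ell) > 0$ (since Assumption \ref{asm:symmetric} gives $\beta\lambda_\ell < 1$), so Fact \ref{fact:Nash} yields $\bm{a}^* = \sum_\ell \tfrac{1-\beta}{1-\beta\lambda_\ell}\, \underline{f}_\ell\, \bm{u}^\ell$.

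Next I would express welfare as a quadratic form that is diagonal in this basis. Row-stochasticity lets one rewrite $V(\bm{a}) = -2\beta\, \bm{a}^\tr(\bm{I}-\bm{G})\bm{a} - (1-\beta)\|\bm{a}-\bm{f}\|^2$. Substituting the decomposition of $\bm{a}^*$ and invoking orthonormality, the cross-terms in $\ell$ vanish and welfare collapses to
\begin{equation*}
V(\bm{a}^*) \;=\; \sum_\ell \zeta(\lambda_\ell)\, \underline{f}_\ell^2, \qquad \zeta(\lambda) \;:=\; -\frac{\beta(1-\beta)(1-\lambda)\bigl[2-\beta(1+\lambda)\bigr]}{(1-\beta\lambda)^2}.
\end{equation*}
Direct differentiation (with some cancellation) gives $\zeta'(\lambda) = 2\beta(1-\beta)^3/(1-\beta\lambda)^3 > 0$, so $\zeta$ is strictly increasing. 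This derivative computation, with its cancellations across several terms, is the step I expect to be the main obstacle; everything else is essentially bookkeeping.

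Finally, since welfare is now a convex combination of the distinct values $\zeta(\lambda_1),\dots,\zeta(\lambda_n)$ (distinct by Assumption \ref{asm:distinct}), a Rayleigh-type argument finishes the proof: the malevolent planner minimizes by concentrating all weight on the smallest entry, giving $\bm{f}^* = \bm{u}^n$ and value $\zeta(\lambda_n)$. The benevolent planner concentrates on the largest entry. Note that $\bm{u}^1 \propto \bm{1}$, and shifting $\bm{f}$ by any multiple of $\bm{1}$ leaves both pairwise differences $a_i^* - a_j^*$ and deviations $a_i^* - f_i$ unchanged, so the $\bm{u}^1$ direction is welfare-irrelevant and is excluded by the ``distributing a fixed amount of disagreement'' reading of the constraint; among the remaining directions, monotonicity of $\zeta$ selects $\bm{f}^* = \bm{u}^2$, with value $\zeta(\lambda_2)$.
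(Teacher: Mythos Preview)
Your proof is correct and follows essentially the same approach as the paper: diagonalize welfare in the eigenbasis of $\bm{G}$, obtain $V(\bm{a}^*)=\sum_\ell \zeta(\lambda_\ell)\underline{f}_\ell^2$ with the same explicit $\zeta$, use monotonicity of $\zeta$ to pick out the extremal eigenvectors, and handle the $\bm{u}^1$ direction separately via its welfare-irrelevance. Your route to the diagonal form via $V(\bm{a})=-2\beta\,\bm{a}^\tr(\bm{I}-\bm{G})\bm{a}-(1-\beta)\|\bm{a}-\bm{f}\|^2$ is a bit cleaner than the paper's inner-product manipulation, and you verify $\zeta'(\lambda)=2\beta(1-\beta)^3/(1-\beta\lambda)^3>0$ explicitly where the paper defers this to its appendix lemma; but these are stylistic, not substantive, differences.
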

\begin{proof}
     We begin by writing the formula for equilibrium welfare in terms of an inner product expression depending on $\mb{f}$ and $\mb{G}$.
    \begin{align*}
        V^* &=  -\sum_i \left( (1-\beta)(a_i^* - f_i)^2 + \sum_j g_{ij} \beta(a_i^* - a_j^*)^2 \right)\\
        &= -\la \mb{a}^*, ((1+\beta)\mb{I} - 2\beta\mb{G}) \mb{a}^* \ra + (1 - \beta)\la \mb{f} - 2\mb{a}^*, \mb{f} \ra\\
        &= -(1-\beta) \left[\la \mb{f}, \mb{f} \ra + (1- \beta)\la (\mb{I} - \beta \mb{G})^{-1} \mb{f}, ((1+\beta)\mb{I} - 2\beta \mb{G})(\mb{I} - \beta \mb{G})^{-1} \mb{f} - 2 \mb{f} \ra \right]
    \end{align*}
    
We now switch into the basis of principal components.  Recall $\underline{\mb{z}} = \mb{U}^{\intercal} \mb{z}$. Then
    $$\mb{a} = (1 - \beta) ( \mb{I} - \beta \mb{G})^{-1} \mb{f}$$
    if and only if 
    $$ \underline{\mb{a}} = (1 - \beta) (\mb{I} - \beta \mb{\Lambda})^{-1} \underline{\mb{f}}.$$ Moreover, we may replace all vectors and matrices in the above expression for $-W^*$ by their versions in the new basis. All matrices involved are diagonal, so this greatly simplifies the expression; indeed, as shown in Lemma \ref{lem:apdx:funcs} in the appendix, this yields the following expression
    \begin{align*}
        V^* = \sum_{\ell=1}^n \zeta (\lambda_\ell) \underline{f}_{\ell}^2,
    \end{align*}
    for some increasing, nonnegative function $\zeta(\lambda)$, with $\zeta(1)=0$ (so that the $\lambda_1$ term drops out, since $\lambda_1=1$). Note also that because the change of basis is orthonormal, the constraint set for $\bm{f}$ does not change.
    
    Because $\zeta$ is increasing in $\lambda$, the optimum for $\gamma=-1$ is achieved by $\mb{f}^* = \mb{u}^n$ and is equal to $\zeta(\lambda_n)$. The optimum for $\gamma=1$ is achieved by $\mb{f}^* = \mb{u}^2$ and is equal to $\zeta(\lambda_2)$.
\end{proof}

Proposition \ref{prop:toyResults} shows that when $ \mb{f} $ is constrained to a sphere, extremal welfare in the minimization problem depends on $\mb{G}$ only through an extreme eigenvalue, $\lambda_n$ or $\lambda_2$. Indeed, it remains true if we replace the constraint by  $\left\Vert \mb{f} \right\Vert \leq C$, for $C >0$, as long as we make the adjustment that $\zeta(\cdot)$ is replaced by $C\zeta(\cdot)$. Thus, $\zeta(\lambda_n)$ captures the sensitivity of welfare to the size of the invervention when the intervention is chosen optimally.\footnote{We reproduce the function $\zeta$ here for convenience, from Lemma \ref{lem:apdx:funcs}:  $$\zeta ( \lambda) = -\beta (1 - \beta) \frac{(1 - \lambda) [2 - \beta(1 +\lambda)]}{(1 - \beta \lambda)^2}.$$}

In terms of the form of intervention, loading all the diversity in favorite actions onto the last principal component is the most effective way of reducing welfare subject to an upper bound on the norm of the favorite actions. This is the first manifestation of the idea that the last principal component is the one to which welfare is most sensitive. 

In contrast, the second part of the result highlights that welfare is, in a sense, \emph{least} sensitive to disagreement along the second principal component. For fixed norm of $\bm{f}$, if we load all disagreement onto $\bm{u}^2$, welfare turns out to be the \emph{least} negative---least changed from a baseline of 0 when there is no disagreement.

\begin{figure}
    \centering
    \includegraphics[width=0.8\textwidth]{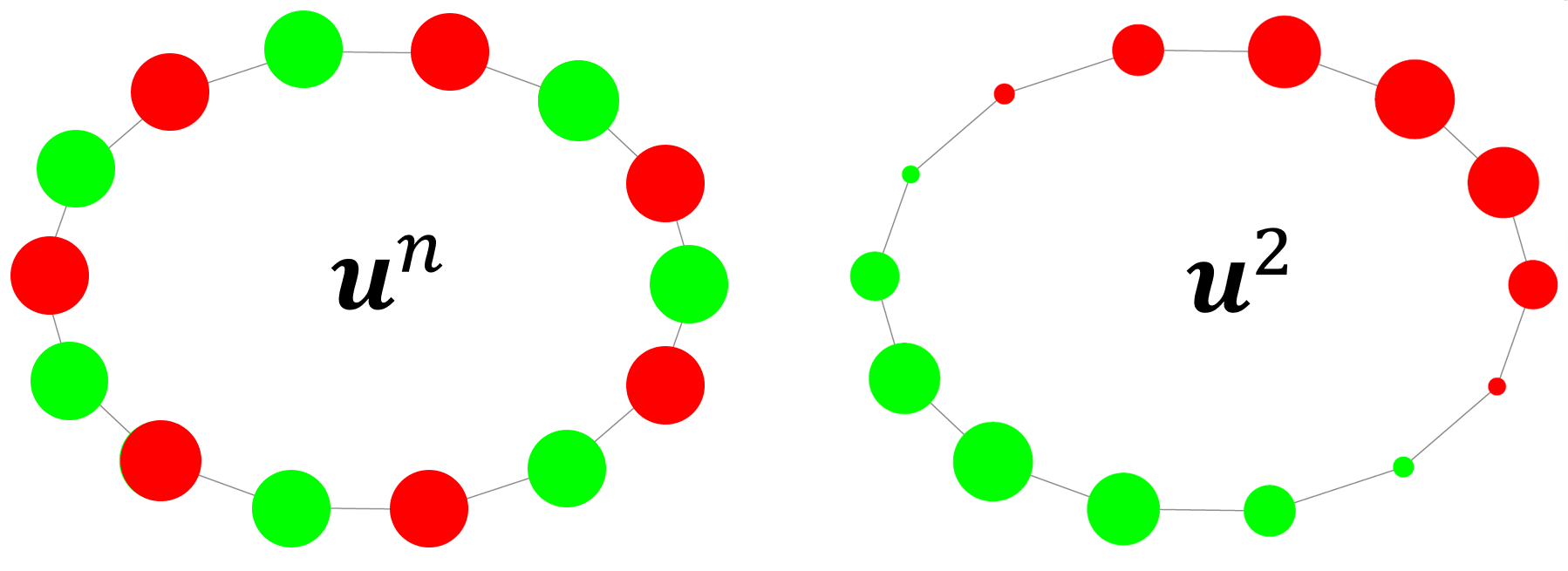}
    \caption{The eigenvectors of a circle network corresponding to the $n^{\text{th}}$ largest and $2^{\text{nd}}$ largest eigenvalues, respectively. The former maximizes local heterogeneity and separates neighbors, while the latter finds a global cut.}
    \label{fig:un_u2_comparison}
\end{figure}

Finally, it is worth remarking on the fact that $\bm{u}^1$ plays no role in the characterization. Note that in this problem $\bm{u}^1$ is a constant vector, because $\mb{G}$ is row-stochastic.\footnote{In general, this vector gives the agents' \emph{eigenvector centralities} in the network, which measure the global influence of each agent. Because of the symmetry of interactions and the fact that each agent has the same endowment of total interaction probability, there is no heterogeneity in this, but our analysis can be extended to settings where there is heterogeneity in interaction quantity.} Thus, changes in $\underline{f}_1$ correspond to constant shifts in favorite actions, which, by Fact \ref{fact:Nash} translate into the same constant shifts in equilibrium actions. These shifts do not affect welfare, and so are never used by the planner.

\subsection{Local and global disagreement at the optima}

Next, we are interested in describing how the eigenvectors identified in Proposition \ref{prop:toyResults} relate to the network, and what qualitative comparisons we can make between equilibrium behavior at the two configurations analyzed. We will show that, in a suitable sense, the last eigenvector $\bm{u}^n$ is the one that maximizes local disagreement, while the second eigenvector $\bm{u}^2$ maximizes global disagreement, subject to a constraint on norm.

We make a few definitions. Let $\mathcal{D}_{\text{R}}$ be the uniform distribution on the set  $\{(i,j) \in \mc{N} \times \mc{N} \text{ s.t. } i \neq j\}$. This corresponds to drawing a random pair. Let $\mathcal{D}_{\bm{G}}$ be the distribution on the same set obtained by drawing the pair $(i,j)$ with probability $g_{ij}/n$.  

\begin{definition} Fix a vector $\bm{z} \in \mathbb{R}^n$ such that $\sum_{i\in \mathcal{N}} z_i = 0$. 
    \begin{enumerate}
        \item The \emph{covariance of a random pair} for $\bm{z}$ is defined to be $\mathbb{E}_{(i,j) \sim \mathcal{D}_{\text{R}}} [z_i z_j ]$
        \item The \emph{covariance of neighbors} for $\bm{z}$ is defined to be $\mathbb{E}_{(i,j) \sim \mathcal{D}_{\bm{G}}} [z_i z_j ]$.
    \end{enumerate}
\end{definition}

Now we use the covariance of the actions of a pair of neighbors selected  at random\footnote{According to the same distribution that selects partners to play the bilateral game in our model.} as a measure of local disagreement, and the covariance of the actions of a random pair of agents as a measure of global disagreement. In each case, the more negative the number, the more disagreement there is of the relevant kind.

\begin{proposition}\label{prop:disagreementParadox}
    Let $\mathcal{F}$ be the set of vectors $\bm{f}$ satisfying $\sum_{i \in \mathcal{N}} f_i=0$ and $\Vert \bm{f} \Vert = 1$. The values of $\bm{f}$ in this set that maximize and minimize each quantity below are given by the following table:

    \begin{center}
        \begin{tabular}{ccc} 
        Statistic for eq'm actions $\mb{a}^{*}(\mb{f})$ & maximizer & minimizer  \\
        \hline \hline covariance of neighbors & $\mb{u}^{2}$ & $\mb{u}^{n}$ \\
        \hline covariance of random pair & $\mb{u}^{n}$ & $\mb{u}^{2}$ \\
        \end{tabular}
    \end{center}
\end{proposition}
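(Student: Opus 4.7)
The plan is to rewrite both covariance statistics as quadratic forms in $\bm{a}^*$, then pass to the principal-components basis and reduce each of the four optimization problems to maximizing or minimizing a weighted sum $\sum_\ell c(\lambda_\ell)\underline{f}_\ell^2$ over the sphere, where the weights $c(\lambda_\ell)$ are monotone in $\lambda_\ell$. The extremal value over $\mathcal{F}$ is then attained by concentrating all mass on a single eigencomponent, namely the one with the largest or smallest value of the coefficient $c(\lambda_\ell)$ among $\ell\ge 2$.

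First I would express the statistics as quadratic forms. Since $\bm{G}$ is symmetric and row-stochastic, averaging over neighbors gives
\begin{equation*}
\mathbb{E}_{(i,j)\sim \mathcal{D}_{\bm{G}}}[a^*_i a^*_j] \;=\; \tfrac{1}{n}\,\langle \bm{a}^*, \bm{G}\bm{a}^*\rangle.
\end{equation*}
For the random-pair statistic, I would use the identity $\sum_{i\ne j} a^*_i a^*_j = \big(\sum_i a^*_i\big)^2 - \|\bm{a}^*\|^2$ together with the observation that $\bm{u}^1 \propto \bm{1}$ and $\sum_i f_i=0$ force $\underline{f}_1=0$, and thus by Fact \ref{fact:Nash} (applied to centered $\bm{f}$) $\sum_i a^*_i=0$. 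Hence
\begin{equation*}
\mathbb{E}_{(i,j)\sim \mathcal{D}_{\text{R}}}[a^*_i a^*_j] \;=\; -\,\tfrac{1}{n(n-1)}\,\|\bm{a}^*\|^2.
\end{equation*}

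Next I would switch to the eigenbasis, where $\underline{a}^*_\ell = \tfrac{1-\beta}{1-\beta\lambda_\ell}\,\underline{f}_\ell$. Orthogonality of $\bm{U}$ preserves the constraints $\Vert \bm{f}\Vert=1$ and $\underline{f}_1=0$, and yields
\begin{align*}
\|\bm{a}^*\|^2 &= \sum_{\ell=2}^n h(\lambda_\ell)\,\underline{f}_\ell^2, & h(\lambda)&:=\Big(\tfrac{1-\beta}{1-\beta\lambda}\Big)^{\!2},\\
\langle \bm{a}^*,\bm{G}\bm{a}^*\rangle &= \sum_{\ell=2}^n \lambda_\ell\,h(\lambda_\ell)\,\underline{f}_\ell^2. & &
\end{align*}

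The key computational step is to check that both $h(\lambda)$ and $k(\lambda):=\lambda\,h(\lambda)$ are strictly increasing in $\lambda$ on $[-1,1]$ for every $\beta\in[0,1)$. Differentiating gives $h'(\lambda)=2\beta(1-\beta)^2(1-\beta\lambda)^{-3}>0$ and $k'(\lambda)=(1-\beta)^2(1+\beta\lambda)(1-\beta\lambda)^{-3}>0$, using $1\pm\beta\lambda\ge 1-\beta>0$. Once monotonicity is established, the four optimizations become trivial: on the sphere with $\underline{f}_1=0$, a weighted sum $\sum_{\ell=2}^n c(\lambda_\ell)\underline{f}_\ell^2$ with positive, strictly $\lambda$-monotone coefficients is maximized by placing all weight on $\lambda_2$ (the largest eigenvalue strictly below $\lambda_1=1$) and minimized by placing all weight on $\lambda_n$. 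Applying this to $\|\bm{a}^*\|^2$ and to $\langle\bm{a}^*,\bm{G}\bm{a}^*\rangle$, and accounting for the sign flip in the random-pair covariance, delivers the entries of the table.

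The only slightly delicate point is the monotonicity check for $k$, since it involves a product of a term that can be negative ($\lambda$) with an increasing positive term; but the derivative factorization above shows the sign is controlled uniformly by $\beta<1$, so this is routine rather than an obstacle. Everything else is bookkeeping in the eigenbasis.
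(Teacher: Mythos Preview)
Your proof is correct and follows essentially the same route as the paper's: express each covariance as a quadratic form in $\bm{a}^*$, diagonalize in the eigenbasis to obtain $\sum_{\ell\ge 2} c(\lambda_\ell)\underline{f}_\ell^2$, and use monotonicity of the coefficient in $\lambda$ to identify the extremizers as $\bm{u}^2$ and $\bm{u}^n$. Your explicit derivative computations for $h$ and $k$ in fact supply a step the paper leaves implicit (it simply asserts monotonicity of the functions $\eta$ and $\nu$ recorded in Lemma~\ref{lem:apdx:funcs}), and your normalization $1/[n(n-1)]$ for the random-pair covariance is the correct one, though irrelevant for the optimization.
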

\begin{proof}
 We show each covariance result separately.
    
  The covariance of neighbors for equilibrium actions $\bm{a}^*(\bm{f})$ is given by
    $${1 \over n} \left( \sum_{i,j \in \mathcal{N}} g_{ij} a_i^* a_j^* \right) = {1 \over n} \la \mb{a}^*, \mb{Ga}^* \ra,$$
    because we sample an agent $i$ uniformly at random from $\mathcal{N}$, and a second agent $j$ incident to $i$ (the probability of an agent $k$ being sampled is $g_{ik}$). As with Proposition \ref{prop:toyResults}, we can rewrite this expression in the principal component basis as 
    $$\sum_{\ell=1}^n \eta(\lambda_{\ell}) \underline{f}_{\ell}^2,$$
    for an increasing function $\eta(\lambda)$. (This is the content of Lemma \ref{lem:apdx:funcs} in the appendix.) Because each summand is increasing in $\lambda_{\ell}$, this expression achieves its minimum at $\mb{f}^* = \mb{u}^n$ and its maximum at $\mb{f}^* = \mb{u}^2$.
    
    The covariance of a random pair for equilibrium actions $\bm{a}^*(\bm{f})$ is given by
    $${1 \over n^2} \left(\sum_{i,j \in \mathcal{N}} a_i^* a_j^* - \sum_{i \in \mathcal{N}} (a_i^*)^2 \right),$$
    because we sample an agent $i$ uniformly at random from $\mathcal{N}$, and we sample a second agent uniformly at random from $\mathcal{N} \setminus \{i \}$. Because $\mb{G}$ is row-stochastic, its Perron vector is the all-ones vector, with eigenvalue 1. Thus the projection operator onto the eigenspace associated with eigenvalue $\lambda_1=1$ is $\mb{P}_{(1)} = \mb{1} \mb{1}^{\intercal}$. We can then rewrite the above expression as
    $$\la \mb{a}^*, \mb{P}_{(1)} \mb{a}^* \ra - \la \mb{a}^*, \mb{a}^* \ra = \la \mb{a}^*, (\mb{P}_{(1)} - \mb{I}) \mb{a}^* \ra.$$
    
  The average equilibrium action is a constant times the average of $\bm{f}$. Thus, $\mb{P}_{(1)} \mb{a}^* = \mb{0}$. It follows that the covariance-minimizing $\mb{a}$ maximizes $\la \mb{a}, \mb{a} \ra$. This expression can be written in the principal component basis as
    $$\sum_{\ell=1}^n \nu(\lambda_{\ell}) \underline{f}_{\ell}^2,$$
    for a decreasing function $\nu(\lambda)$. (See Lemma \ref{lem:apdx:funcs} for the explicit function.) Because each summand is decreasing in $\lambda_{\ell}$, this expression achieves its minimum by $\mb{f}^* = \mb{u}^n$ and its maximum by $\mb{f}^* = \mb{u}^2$.
    
    Note that $\mb{u}^1$ does not optimize either function for the same reason as mentioned previously: $\mb{u}^1$ is a constant vector for a row-stochastic $\mb{G}$, and the attenuation process keeps it constant, so any such intervention has no effect on welfare.
\end{proof}

\begin{figure}
    \centering
    \includegraphics[width=0.8\textwidth]{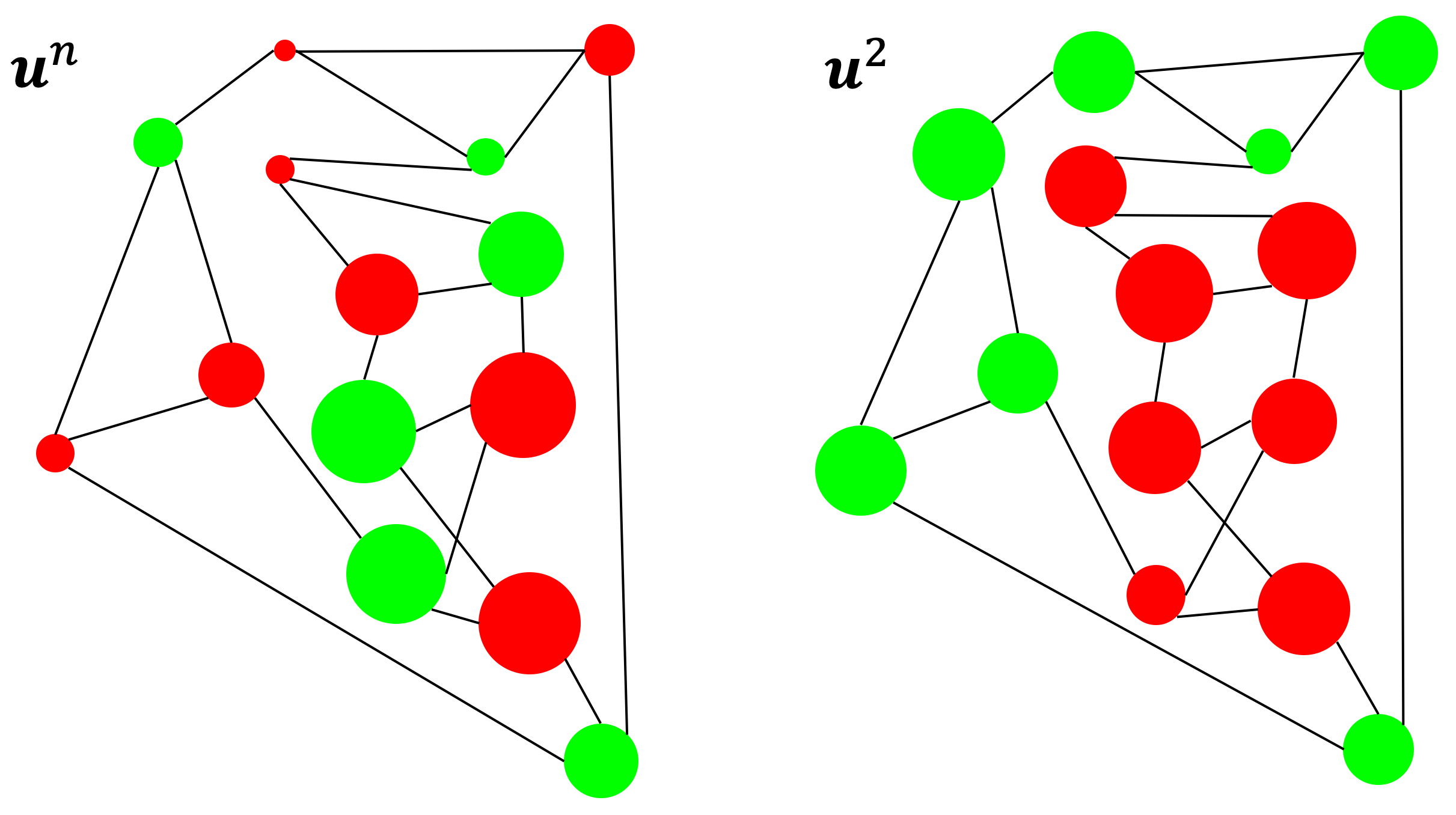}
    \caption{The $2^{\text{nd}}$ and $n^{\text{th}}$ eigenvectors of a more complex network. In this setting, $\mb{u}^n$ and $\mb{u}^2$ still respectively recover local and global structure. In particular, $\mb{u}^n$ separates neighbors by creating heterogeneity between actions (both in sign and in magnitude), while $\mb{u}^2$ globally creates two nearly-homogenous groups.
    }
    \label{fig:un_u2_realnetwork}
\end{figure}

Proposition \ref{prop:disagreementParadox} can be seen as a \emph{local-global disagreement tradeoff}: the $\bm{f}$ that maximizes local disagreement in equilibrium also minimizes global disagreement.\footnote{If $\beta$ is not too small, we can obtain equivalent results defining disagreement as the expectation of $(a_i-a_j)^2$ under the appropriate distribution of $i,j$ (random pair or neighbor).}  

\subsection{Interpretation and discussion} \label{sec:interp}

Imagine that an adversary manipulates individual ideal points in a community to reduce its members' welfare. Naively, one might expect that the consequences of this adversary's activity would be to cause global discord: to make it likely that two randomly chosen individuals would disagree strongly. Our results show that, in fact, for a given amount of cross-sectional variation in favorite points, the adversary in a sense accomplishes the opposite. We now explore this somewhat counterintuitive phenomenon.

First, let us formalize what we said in the previous paragraph. Proposition \ref{prop:toyResults} says that the malevolent planner chooses  $\bm{f} = \bm{u}^n$. Proposition \ref{prop:disagreementParadox} says that this choice causes the \emph{least} global disagreement: it creates the highest possible covariance of equilibrium actions between random pairs of individuals.

To understand the forces behind these results, we again consider the example network in Figure \ref{fig:un_u2_comparison}, showing  $\mb{u}^2$ and $\mb{u}^n$ for a circle network. As a warm-up exercise, let us discuss a limit case. Suppose that $\beta$ is positive but quite small, so that, by (\ref{eq:Neumann}) in Fact \ref{fact:Nash}, $\bm{a}^* \approx \bm{f}$. Then studying statistics of the favorite points $\bm{f}$ is the same as studying the corresponding statistics of $\bm{a}^*$. Let $\mathcal{F}$ be the set of vectors $\bm{f}$ satisfying $\sum_{i \in \mathcal{N}}=0$ and $\Vert \bm{f} \Vert = 1$. We will now note that the extreme eigenvector $\bm{u}^n$ achieves extreme levels of both covariance between neighbors and disagreement disutility. For intuition, consider Figure \ref{fig:un_u2_comparison}.  It is clear that under $\bm{f}=\mb{u}^n$, each agent has a favorite point that is the opposite of those of its neighbors. It is then intuitive that neighbor covariance is as negative as possible: each person disagrees with a random neighbor for sure. Because the costs of disagreement are convex, it is also intuitive that this configuration creates maximum disutility from miscoordination (relative to one where neighbors were closer to each other, as in $\bm{u}^2$). Indeed, by making $\bm{f}$ ``vary gradually'' (changing as little as possible between connected nodes), as in $\bm{u}^2$, we achieve the opposite effect and minimize both disutility and covariance.

These effects are intuitive. However, they do not exhaust the story: to understand how much disutility players experience, we must understand their actions in equilibrium. And, as we have already remarked in presenting Fact \ref{fact:Nash} and in Figure \ref{fig:example-eqm}, for $\beta$ not too close to $0$, these involve substantial attenuation relative to favorite actions. We now turn to explaining this aspect of the result.

Using (\ref{eqn:nashEqm}) and rewriting the condition in the principal component basis, we have $$ \underline{a}_\ell^* = \frac{1-\beta}{1-\beta \lambda_\ell} f_\ell.$$ When only one principal component is represented in the favorite actions, as when $\bm{f}=\bm{u}^n$ or $\bm{f}=\bm{u}^2$, the same is true for equilibrium actions. In other words, in these cases $\bm{a}^*$ is a \emph{scaling} of $\bm{f}$. But the scaling is nontrivial: in best-responding to each other, the disagreement in favorite points is attenuated to a smaller disagreement in equilibrium actions. Indeed, because players best-respond to their neighbors, under $\bm{f}=\mb{u}^n$ they have a strong reason to bring actions closer to zero, in order to coordinate with neighbors. 

Thus, the result involves both forces described above having to do with the structure of $\bm{f}$ alone (which are present even in the $\beta \approx 0$ case) as well as the equilibrium attenuation of $\bm{a}$ (which has a substantial effect only when $\beta $ is far from zero); these forces may pull in opposite directions.

Our result shows that attenuation is \emph{not} enough to overcome the harm done by the strong local disagreement induced by $\bm{u}^n$. One reason for this is that even when players benefit from attenuation by miscoordinating less with neighbors, under $\bm{f}=\bm{u}^n$ they also suffer by being farther from ideal points. It turns out that the planner maximizes their pain by making near neighbors disagree strongly. This pattern, presented in an exteremely simple way for the circle, generalizes to more complex networks as shown in Figure \ref{fig:un_u2_realnetwork}.

On the other hand, a planner who is concerned with creating global disagreement (i.e., minimizing the covariance of a random pair for equilibrium actions) is not at all concerned with making neighbors disagree. For this planner, minimizing attenuation turns out to be the dominant consideration: the planner wants to make sure that as much of the ``size'' of initial disagreement remains in the final equilibrium actions. It is intuitive that this is accomplished by making neighbors \emph{agree} as often as possible. Then strategic forces will not lead them to moderate their behavior by much relative to $\bm{f}$. Of course, the requirement (imposed by definition of $\mathcal{F}$) that $\bm{f}$ have a positive norm, along with the normalization that the average of $\bm{f}$ is equal to zero,  requires heterogeneity across society in favorite points. The best way for a planner to place this heterogeneity is to put the polarization along a ``cut'' such as that depicted in the vectors $\bm{u}^2$ of Figure \ref{fig:un_u2_comparison}. Here disagreement is designed to be as small as possible across most links, and at the optimum, $\bm{f}$ (and, consequently $\bm{a}^*$) will be quite similar for most nodes at short distances. As we have already noted, the configuration $\bm{u}^2$ finds cohesive areas in the network and keeps their $\bm{f}$ similar, while  making relatively ``faraway'' regions disagree with each other. Especially in networks that have good cuts, with large groups that interact fairly little, this is natural: if the global disagreement in $\bm{f}$ is experienced across few links, then it makes little difference to welfare. The vector $\bm{u}^2$ can be seen in a network more interesting than the circle in Figure \ref{fig:un_u2_realnetwork}.

We have spoken informally of $\bm{u}^n$ tending to make neighbors take opposite signs, whereas $\bm{u}^2$ divides the network into cohesive regions. These notions have been extensively formalized in the graph theory literature: see \citet*{DR94}, \citet*{AK97}, and \citet*{Urschel18} for some examples.

\section{Generalizations: General Initial Conditions and Cost Functions}

We return to the general case of the planner's problem stated in (\ref{eqn:genPlanner}): 
\begin{equation*}
    \begin{aligned}
    \max_{\bm{\delta}} \quad & \gamma V(\bm{a}^*)\\
    \textrm{s.t.} \quad & \mb{f} = \hat{\mb{f}} + \bm{\delta}\\
    & \mb{a^*} = (1-\beta)(\mb{I} - \beta\mb{G})^{-1} \mb{f},\\
    & c(\bm{\delta}) \leq C.
\end{aligned}
\end{equation*}

The previous section showed that for very simple planner's constraints, there is a simple description of the most welfare-consequential interventions. However, we worked under many simplifying assumptions: $\hat{\mb{f}}$ was taken to be $\bm{0}$, and the constraint on interventions was to choose one in a ball or on a sphere. 

It is worthwhile to relax both restrictions: we want to consider a status quo that is more flexible. We want to understand to what extent the intuitions extend to more general cost functions.  In this section, we address these issues.

To state results, we need to make a definition measuring the similarity of various vectors to principal components of the underlying network. For this, we use the notion of \emph{cosine similarity}. 

\begin{definition}[Cosine Similarity]
    The \emph{cosine similarity} of two nonzero vectors $\mb{y}$ and $\mb{z}$ is $$\rho (\mb{y}, \mb{z}) = {\mb{y} \cdot \mb{z} \over \left\Vert \mb{y} \right\Vert \left\Vert \mb{z} \right\Vert}.$$
\end{definition}

A canonical interpretation of cosine similarity is that it gives the cosine of the angle between the vectors $\mb{y}$ and $\mb{z}$ in the plane determined by $\mb{y}$ and $\mb{z}$. When $\rho(\mb{y}, \mb{z} ) = 1$ (resp., $-1$), the vector $\mb{z}$ is a positive (resp., negative) rescaling of $\mb{y}$. A cosine similarity of 0 implies that $\mb{y}$ is orthogonal to $\mb{z}$.

\subsection{A monotonicity result}

We are now ready to characterize optimal interventions for a quadratic planner's adjustment cost and arbitrary status quo vector. 

Recall the earlier finding that in the simple  planner's problem with $\gamma=-1$ (malevolent planner) and a constraint of the form $\Vert \bm{f} \Vert \leq 1$, the planner focused \emph{only} on the lowest principal component.  The substance of the next result is that in a suitable sense, this finding generalizes: the planner intervenes more on the principal components with lower eigenvalues. 

\begin{theorem}[Characterization of Optimal Interventions] \label{thm:main}
    Suppose\footnote{Note that we can accommodate any scaling of such a function by suitably adjusting $C$.} $c(\bm{\delta}) =  \Vert \bm{\delta} \Vert^2$. Also suppose that either $\gamma=-1$ or $C$ is small enough that $W(\bm{a}^*)=\bm{0}$ is not feasible for the planner. For generic $\hat{\bm{f}}$, the similarity between $\bm{\delta}^* $ and principal component $\mb{u}^{\ell} (\mb{G})$ satisfies, for $\ell \geq 2$,
    $$\rho (\bm{\delta}^* , \mb{u}^{\ell} ) = \rho (\hat{\mb{f}}, \mb{u}^{\ell} ) \cdot m (\lambda_{\ell}),$$
    where the \textbf{multiplier function} $m$ is such that $|m(\lambda)|$ is decreasing in $\lambda$.
\end{theorem}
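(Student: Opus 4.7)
The plan is to work in the principal-component basis, where both the quadratic welfare objective and the quadratic cost decouple across eigendirections, reducing the constrained optimization to one-dimensional problems tied together by a single Lagrange multiplier. By the spectral representation used in the proof of Proposition \ref{prop:toyResults} (Lemma \ref{lem:apdx:funcs}), equilibrium welfare decomposes as $V(\bm{a}^*) = \sum_{\ell=1}^n \zeta(\lambda_\ell)\,\underline{f}_\ell^2$, where $\zeta$ is increasing with $\zeta(1)=0$ and $\zeta(\lambda) < 0$ for $\lambda < 1$. Because $\bm{U}$ is orthonormal, $\Vert\bm{\delta}\Vert^2 = \sum_\ell \underline{\delta}_\ell^2$, and the planner's problem becomes
\begin{equation*}
    \max_{\underline{\bm{\delta}}} \; \gamma \sum_{\ell=1}^n \zeta(\lambda_\ell)(\underline{\hat{f}}_\ell + \underline{\delta}_\ell)^2 \quad \text{s.t.} \quad \sum_\ell \underline{\delta}_\ell^2 \leq C.
\end{equation*}
The coordinate-wise first-order condition $\gamma\zeta(\lambda_\ell)(\underline{\hat{f}}_\ell + \underline{\delta}_\ell^*) = \mu\,\underline{\delta}_\ell^*$, with $\mu$ the Lagrange multiplier on the cost constraint, rearranges to
\begin{equation*}
    \underline{\delta}_\ell^* \;=\; \frac{\gamma\zeta(\lambda_\ell)}{\mu - \gamma\zeta(\lambda_\ell)}\,\underline{\hat{f}}_\ell.
\end{equation*}
The two disjunctive hypotheses of the theorem ensure the constraint binds with $\mu>0$: for $\gamma=-1$ the objective is a convex quadratic so any maximum lies on the sphere, while for $\gamma=1$ the hypothesis that $\bm{f}=\bm{0}$ is infeasible rules out the interior optimum.

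Next, I would convert the FOC into cosine similarities. Since $\Vert\bm{u}^\ell\Vert=1$, $\rho(\bm{\delta}^*,\bm{u}^\ell) = \underline{\delta}_\ell^*/\Vert\bm{\delta}^*\Vert$ and $\rho(\hat{\bm{f}},\bm{u}^\ell) = \underline{\hat{f}}_\ell/\Vert\hat{\bm{f}}\Vert$, so the FOC reads
\begin{equation*}
    \rho(\bm{\delta}^*,\bm{u}^\ell) \;=\; \rho(\hat{\bm{f}},\bm{u}^\ell)\cdot m(\lambda_\ell), \qquad m(\lambda) \;:=\; \frac{\gamma\zeta(\lambda)}{\mu - \gamma\zeta(\lambda)}\cdot\frac{\Vert\hat{\bm{f}}\Vert}{\Vert\bm{\delta}^*\Vert}.
\end{equation*}
Monotonicity of $|m|$ then follows from two observations. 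First, the prefactor $\Vert\hat{\bm{f}}\Vert/\Vert\bm{\delta}^*\Vert$ is a constant in $\lambda$. Second, writing $x = |\zeta(\lambda)| = -\zeta(\lambda)$, which is decreasing in $\lambda$ by monotonicity of $\zeta$, the quantity $|m(\lambda)|$ is proportional to $x/(\mu+x)$ when $\gamma=1$ and to $x/(\mu-x)$ when $\gamma=-1$; in the latter case the second-order condition for a Lagrangian maximum forces $\mu > \max_\ell|\zeta(\lambda_\ell)| = |\zeta(\lambda_n)|$, guaranteeing a positive denominator. Both maps $x\mapsto x/(\mu\pm x)$ are strictly increasing on the relevant domain, so composing with the fact that $|\zeta(\lambda)|$ is decreasing in $\lambda$ gives that $|m(\lambda)|$ is decreasing for $\ell\geq 2$. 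The case $\ell=1$ is excluded because $\zeta(\lambda_1)=0$ forces $\underline{\delta}_1^*=0$ and the formula degenerates.

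The step I expect to be most delicate is justifying that the KKT point identified above really is the global optimum. For $\gamma=-1$ we are maximizing a convex function on a ball, so stationarity is necessary but not automatically sufficient, and in principle multiple critical branches could arise, indexed by values of $\mu$ that equal some $|\zeta(\lambda_\ell)|$. Such degenerate branches would force $\underline{\hat{f}}_\ell=0$ in that eigendirection via the FOC, and this is precisely what the genericity assumption on $\hat{\bm{f}}$ (namely $\underline{\hat{f}}_\ell\neq 0$ for all $\ell$) is used to rule out. With genericity in hand, the closed-form expression for $\underline{\delta}_\ell^*$ pins down a unique critical point in the well-signed regime $\mu>|\zeta(\lambda_n)|$, and this must be the maximizer by compactness.
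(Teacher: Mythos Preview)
Your approach is essentially the paper's: diagonalize in the eigenbasis, write the FOC as $\underline{\delta}_\ell^* = \dfrac{\gamma\zeta(\lambda_\ell)}{\mu - \gamma\zeta(\lambda_\ell)}\,\hat{\underline{f}}_\ell$, translate to cosine similarities, and deduce monotonicity of $|m|$ from monotonicity of $\zeta$ once the denominator is known to be positive. The paper works with the normalized variable $x_\ell = \underline{\delta}_\ell/\hat{\underline{f}}_\ell$ rather than $\underline{\delta}_\ell$ itself, but this is cosmetic.

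The one step that does not go through as written is your justification, in the $\gamma=-1$ case, that $\mu > |\zeta(\lambda_n)|$. The necessary second-order condition at a constrained maximum requires negative semidefiniteness of the Lagrangian Hessian only on the \emph{tangent space} of the sphere, not on all of $\mathbb{R}^n$; since (generically) $\underline{\delta}_n^*\neq 0$, the coordinate direction $e_n$ is not tangent, so the SOC does not directly force $|\zeta(\lambda_n)|-\mu\leq 0$. Nor does compactness alone select the well-signed KKT branch, because there can be additional KKT points with $\mu$ lying between the values $|\zeta(\lambda_\ell)|$. The argument the paper invokes (via \citet*{GGG20}) is a sign-flip: at the global maximum of $\sum_\ell |\zeta(\lambda_\ell)|(\hat{\underline{f}}_\ell+\underline{\delta}_\ell)^2$ on the sphere, each $\underline{\delta}_\ell^*$ must share the sign of $\hat{\underline{f}}_\ell$, since otherwise replacing $\underline{\delta}_\ell^*$ by $-\underline{\delta}_\ell^*$ is still feasible and strictly increases the $\ell$-th summand. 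The FOC then gives $\mu - |\zeta(\lambda_\ell)|>0$ for every $\ell$. With this substitution your argument is complete.
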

\begin{proof}
  Let $\mb{f}^*$ give the optimal choice of $\mb{f}$, so that ${\bm{\delta}}^* = {\mb{f}}^* - \hat{\mb{f}}$. Define $$x_{\ell} = {\underline{f}_{\ell} - \hat{\underline{f}}_{\ell} \over \hat{\underline{f}}_{\ell}}.$$
    
    Then we can rewrite the optimization problem in the principal component basis as follows, for an increasing, negative function $\zeta(\lambda)$:
    \begin{equation} \label{eq:xoptimization}
    \begin{aligned}
        \max_{\mb{x}} \quad & \gamma \sum_{\ell} \zeta(\lambda_{\ell}) (1 + x_{\ell})^2 \hat{\underline{f}}_{\ell}^2\\
        \textrm{s.t.} \quad &  \sum_{\ell} \hat{\underline{f}}_{\ell}^2 x_{\ell}^2 \leq C.\\
    \end{aligned}
    \end{equation}

By our assumption that either $\gamma=-1$ or achieving no miscoordination is infeasible, the budget constraint binds. Thus, letting $\mu$ be the Lagrange multiplier on the budget constraint, the Karush-Kuhn-Tucker necessary condition for optimization is
    $$2 \gamma \hat{\underline{f}}_{\ell}^2 \cdot \zeta(\lambda_{\ell}) (1 + x_{\ell}^*) = 2 \hat{\underline{f}}_{\ell}^2 \cdot \mu  x^*_{\ell}.$$
    Solving for $x_\ell^*$, we get  $\gamma \zeta(\lambda_{\ell})   = x_{\ell}^* (\mu   + \gamma \zeta(\lambda_{\ell}))$, and since the left-hand side is clearly nonzero whenever $\lambda_\ell\neq 1$, it follows that the right-hand side is nonzero too, and we may write
    \begin{equation} {\gamma \zeta(\lambda_{\ell}) \over \mu   + \gamma \zeta(\lambda_{\ell})} = x_{\ell}^*. \label{eq:xsolution}\end{equation} 
    We note a few facts about the solution. From (\ref{eq:xoptimization}) it follows that the $x_\ell$ are all positive at an optimum if $\gamma=-1$ and all negative at an optimum if $\gamma=1$ (by the same argument as in the proof of Theorem 1 of \citet*{GGG20}).\footnote{The intuition is that at $x_\ell=0$, the marginal returns of increasing any $x_\ell$ are nonzero, while the marginal costs are arbitrarily low.}  Lemma \ref{lem:apdx:funcs} gives us that $\zeta$ is a negative, increasing function of its argument. Thus, the denominator $\mu   + \gamma \zeta(\lambda_{\ell})$ in the solution for $x_\ell^*$ is always positive, and $|x_\ell^*|$ is decreasing in $\lambda_\ell$.
    
    Note that $$x_{\ell}^*= {\left\Vert \bm{\delta}^* \right\Vert \rho \left( \bm{\delta}^* , \mb{u}^{\ell} (\mb{G}) \right) \over \left\Vert \hat{\mb{f}} \right\Vert \rho \left( \hat{\mb{f}} , \mb{u}^{\ell} (\mb{G}) \right)}$$ by definition of cosine similarity, so the previous display (\ref{eq:xsolution}) becomes $${\gamma \zeta(\lambda_{\ell}) \over \mu   + \gamma \zeta(\lambda_{\ell})} = {\left\Vert \bm{\delta}^* \right\Vert \rho \left( \bm{\delta}^* , \mb{u}^{\ell} (\mb{G}) \right) \over \left\Vert \hat{\mb{f}} \right\Vert \rho \left( \hat{\mb{f}} , \mb{u}^{\ell} (\mb{G}) \right)}.$$

Rearranging the previous expression gives  $${\rho \left( \bm{\delta}^* , \mb{u}^{\ell} (\mb{G}) \right) } = \rho \left( \hat{\mb{f}} , \mb{u}^{\ell} (\mb{G}) \right) \cdot {\gamma \zeta(\lambda_{\ell}) \over \mu   + \gamma \zeta(\lambda_{\ell})}  \frac{\left\Vert \hat{\mb{f}} \right\Vert}{\left\Vert \bm{\delta}^* \right\Vert }  .$$ By our earlier remark about the monotonicity of $x_\ell^*$ the claim of the proposition follows.\end{proof}

It is worth remarking on a few features of the key expression    $$\rho (\bm{\delta}^* , \mb{u}^{\ell} ) = \rho (\hat{\mb{f}}, \mb{u}^{\ell} ) \cdot m (\lambda_{\ell}).$$ First, the ``status quo term''  $\rho (\hat{\mb{f}}, \mb{u}^{\ell} )$ reflects that the nature of interventions depends on the status quo. For example, if the planner is benevolent and $\hat{\bm{f}}_\ell$ is zero or nearly zero, then there is very little disagreement in that principal component and thus very little to remove; therefore, the planner will not devote a lot of resources to reducing disagreement in that component. The multiplier term captures that components with lower eigenvalues have a bigger welfare impact, and so a planner will care more about adjusting them. 

Crucially, under the assumptions of the theorem, this is true whether the planner is malevolent or benevolent. In the malevolent case, the intuition is exactly the same as that of Proposition \ref{prop:toyResults}: intensifying disagreement in that component has the greatest impact on the disutility of miscoordination, and the planner will want to take advantage of that to increase this disutility. But, under our assumptions that a benevolent planner cannot reach her bliss point of no misscoordination, the intuition applies in the other direction, too: reducing disagreement in the lowest-eigenvalue component is the most effective use of resources to \emph{reduce} disutility.\footnote{Note that the result in Proposition \ref{prop:toyResults}(2) was about a constraint with a \emph{fixed} amount of disagreement, and thus there is no conflict between that result and this intuition.}

\subsection{General cost functions and small budgets}\label{subsec:cost}
A quadratic adjustment cost is a restrictive assumption. Here we show that we can relax this assumption and obtain a version of our result for small budgets $C$, with a simpler characterization of the multiplier function $m$.

 We first make a few assumptions on the structure of the cost function $c(\cdot)$. 

\begin{assumption}[Properties of the Cost Function] \label{as:c}
    The cost function $c (\cdot)$ satisfies the following assumptions: it is twice differentiable; invariant to permutations of the entries of its argument $\bm{\delta}$; nonnegative on its domain; has the value $c(\mb{0}) = \mb{0}$; and has nonsingular Hessian at $\mb{\delta} = \mb{0}$.
\end{assumption}

Making these assumptions implies by standard arguments the approximation 
$$c(\mb{\delta}) = k \left\Vert \mb{\delta} \right\Vert^2 + o(\Vert\mb{\delta}\Vert^2).$$

\begin{proposition}[Characterization of Small Interventions] \label{prop:small}
    Suppose Assumption \ref{as:c} holds. Then for generic $\hat{\bm{f}}$, the similarity between $\bm{\delta}^* $ and principal component $\mb{u}^{\ell} (\mb{G})$ satisfies, for $\ell \geq 2$,
    $$\rho (\bm{\delta}^* , \mb{u}^{\ell} ) = \rho (\hat{\mb{f}}, \mb{u}^{\ell} ) \cdot m (\lambda_{\ell})$$ where  $$ \lim_{C\rightarrow 0} \frac{m(\lambda_\ell)}{m(\lambda_{\ell'})} = \frac{\zeta(\lambda_\ell)}{\zeta(\lambda_{\ell'})}.$$
\end{proposition}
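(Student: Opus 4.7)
The plan is to reduce Proposition~\ref{prop:small} to a perturbation of Theorem~\ref{thm:main} as $C \to 0$. A second-order Taylor expansion of $c$ at $\bm{\delta}=\mb{0}$ gives $c(\bm{\delta}) = \tfrac{1}{2}\bm{\delta}^\tr H \bm{\delta} + o(\|\bm{\delta}\|^2)$ with $H = \nabla^2 c(\mb{0})$. Permutation invariance of $c$ forces $P^\tr H P = H$ for every permutation matrix $P$, so $H$ has constant diagonal and constant off-diagonal entries, i.e.\ $H = \alpha I + \beta J$ with $J=\mb{1}\mb{1}^\tr$. Since $\mb{G}$ is row-stochastic and symmetric, $\mb{u}^1 = \mb{1}/\sqrt{n}$, so in the principal component basis $H$ is diagonal: $\alpha+n\beta$ along $\mb{u}^1$ and $\alpha$ on each other $\mb{u}^\ell$. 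Because $\mb{u}^1$-shifts in $\bm{f}$ have no welfare effect (Fact~\ref{fact:Nash}), the planner optimally sets $\underline{\delta}_1^*=0$; on the orthogonal complement, the approximation becomes $c(\bm{\delta}) = k\|\bm{\delta}\|^2 + o(\|\bm{\delta}\|^2)$ with $k := \alpha/2$.

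Second, I would argue that $\|\bm{\delta}^*\|\to 0$ as $C\to 0$: the binding budget together with $c(\bm{\delta}) \ge (k - o(1))\|\bm{\delta}\|^2$ forces $\|\bm{\delta}^*\|=O(\sqrt{C})$. Using $V^* = \sum_\ell \zeta(\lambda_\ell)\underline{f}_\ell^2$ from the proof of Proposition~\ref{prop:toyResults} and expanding around $\hat{\bm{f}}$, the planner's problem becomes, to leading order,
$$\max_{\bm{\delta}} \;\; \gamma \sum_{\ell \ge 2}\zeta(\lambda_\ell)\bigl[2\hat{\underline{f}}_\ell\underline{\delta}_\ell + \underline{\delta}_\ell^{\,2}\bigr] \quad \text{s.t.}\quad k\sum_{\ell\ge 2}\underline{\delta}_\ell^{\,2} + o(\|\bm{\delta}\|^2) \le C.$$
The first-order condition with multiplier $\mu$ reads $\gamma\zeta(\lambda_\ell)[\hat{\underline{f}}_\ell + \underline{\delta}_\ell^*] = \mu[k\,\underline{\delta}_\ell^* + o(\|\bm{\delta}^*\|)]$, which I would solve for $\underline{\delta}_\ell^*$ to get
$$\underline{\delta}_\ell^* = \frac{\gamma\zeta(\lambda_\ell)\,\hat{\underline{f}}_\ell}{\mu k - \gamma\zeta(\lambda_\ell)}\bigl(1+o(1)\bigr).$$
Plugging into the binding constraint then shows $\mu$ diverges at rate $1/\sqrt{C}$, so $\mu k$ dominates $\gamma\zeta(\lambda_\ell)$ in the denominator.

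Applying the identity $\rho(\bm{\delta}^*,\mb{u}^\ell)/\rho(\hat{\bm{f}},\mb{u}^\ell) = (\underline{\delta}_\ell^*/\hat{\underline{f}}_\ell)\cdot(\|\hat{\bm{f}}\|/\|\bm{\delta}^*\|)$ from the proof of Theorem~\ref{thm:main} yields
$$m(\lambda_\ell) = \frac{\gamma\zeta(\lambda_\ell)}{\mu k - \gamma\zeta(\lambda_\ell)}\cdot\frac{\|\hat{\bm{f}}\|}{\|\bm{\delta}^*\|}\bigl(1+o(1)\bigr).$$
The factor $\|\hat{\bm{f}}\|/\|\bm{\delta}^*\|$ and the $\mu k$ in the denominator are $\ell$-independent, so forming the ratio $m(\lambda_\ell)/m(\lambda_{\ell'})$ cancels everything but $\zeta(\lambda_\ell)/\zeta(\lambda_{\ell'})$ in the limit $C \to 0$.

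The main obstacle is to control the $o(\cdot)$ remainders uniformly across $\ell$, so that the formal leading-order solution really tracks the true optimizer. The cleanest approach is to rescale $\bm{\delta}^* = \sqrt{C}\,\bm{v}(C)$, divide the FOC by $\sqrt{C}$, and invoke Berge's Maximum Theorem (or the implicit function theorem): the rescaled problem depends continuously on $C$ near zero, has a strictly concave limiting objective on the compact constraint set, and hence the optimizers $\bm{v}(C)$ converge to the unique maximizer of the limiting quadratic-with-linear-objective problem, which by direct inspection exhibits the claimed ratio $\zeta(\lambda_\ell)/\zeta(\lambda_{\ell'})$ of cosine similarities.
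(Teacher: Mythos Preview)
Your proposal is correct and takes essentially the same approach as the paper—Taylor-expand the cost, reduce to Theorem~\ref{thm:main}'s first-order condition, and take $C\to 0$—though you supply the details the paper defers to \citet*[OA3.3]{GGG20}. Your Hessian analysis is in fact sharper than the paper's own exposition: permutation invariance only yields $\nabla^2 c(\mb{0})=\alpha I+\beta'\,\mb{1}\mb{1}^\tr$ (mind the notational clash with the game's $\beta$), and you correctly invoke $\mb{u}^1\propto\mb{1}$ to discard the rank-one piece on the welfare-relevant subspace.
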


The result follows immediately from Theorem \ref{thm:main} by the same argument as in \citet*[OA3.3]{GGG20}.

Because we have an explicit form for $\zeta$ in Lemma \ref{lem:apdx:funcs}, this result gives a complete description of the optimal intervention. All the cosine similiarities for an orthonormal basis fully pin down the direction of the intervention, and its magnitude is found by exhausting the budget. 

\subsection{An implication for networks with homophily}

We emphasized in Section \ref{sec:interp} that interventions for global discord are extremely different in their form from those for welfare reasons. We can now sketch an application of this to assess whether an intervention is in fact optimal in a practical setting. Our point will be that the characterization permits some simple insights, building on what is known about the spectral structure of real social networks.

Suppose a planner faces a network such as the one shown in Figure \ref{fig:cleavages}, with a certain value of $\lambda_2$, say $\lambda_2 \geq 0.9$ in a homophilous network.\footnote{See \citet*{golub2012does} for more details.} Because $\zeta(\lambda_2)$ is small for large $\lambda_2$, the proposition immediately implies a bound on the cosine similarity $\rho (\bm{\delta}^* , \mb{u}^{\ell} )$: if $m(\lambda_\ell)$ is small, then $\rho (\bm{\delta}^* , \mb{u}^{\ell} )$ is small irrespective of the value of $\hat{\bm{f}}$, since the $\rho (\hat{\mb{f}}, \mb{u}^{\ell} )$ factor in Proposition \ref{prop:small} is bounded by $1$.

\begin{figure}
    \centering
    \includegraphics[width=0.8\textwidth]{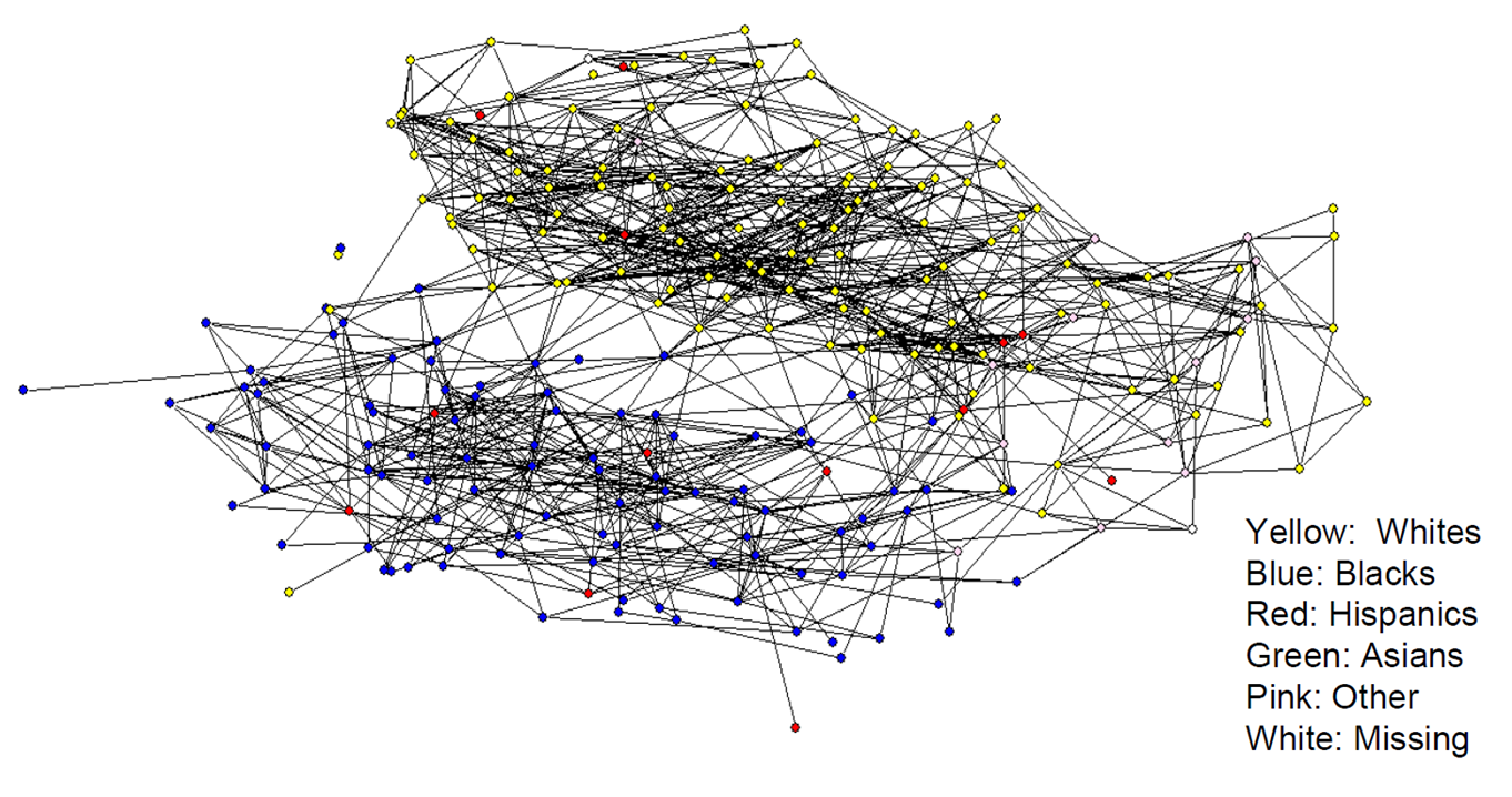}
    \caption{A school network from \citet*{CJP09}.}
    \label{fig:cleavages}
\end{figure}

It follows that if a purportedly optimal intervention has a substantial correlation with $\bm{u}^2$, it is not in fact optimal.\footnote{In practice, $\bm{u}^2$ is highly correlated with demographic covariates (in this example, race), as discussed in \citet*{golub2012does}. So a substantial correlation with race would imply a substantial correlation with $\bm{u}^2$. Thus, one can  refute that an intervention is optimal even without detailed network data, as long as we know that racial homophily is strong.} In other words, welfare-optimal interventions cannot have significant correlation with the main spectral cut of a homophilous network ($\bm{u}^2$). 

\section{Conclusion}

There is a useful duality between the theory of network games and the study of network structure. A familiar pattern goes as follows. We fix a game---e.g., a canonical coordination game---and ask a natural economic question about it, such as what perturbations of agents' ideal points result in large welfare changes. Sometimes, a particular family of network statistics (in this case, the lowest eigenvalue and its associated eigenvector) emerges as an important part of a characterization. Then we have learned both an answer to our economic question and a new interpretation of certain statistics---as well as a new reason to be attentive to the statistics in some situations. 

In this paper, the statistics that emerge from this procedure are $\lambda_n$ and $\bm{u}^n$, as well as other low eigenvalues and eigenvectors. The eigenvalue $\lambda_2$ and the  eigenvector $\bm{u}^2$   have been made famous in both applied mathematics and economics by studies of spectral clustering, homophily, and opinion polarization \citep*{ST07,DVZ03}. But we have spent less time with $\lambda_n$, $\bm{u}^n$, and their friends at the low end of the spectrum. Our analysis here has emphasized their importance for coordination, complementing the findings of some recent studies such as \citet*{BKD14,AK20} and \citet*{GGG20}. More generally, the spectral method for analyzing welfare functionals should be useful for enriching our understanding of the interplay between economic interactions and the networks in which they are embedded.

\bibliographystyle{econ-econometrica}
\bibliography{refs}

\appendix
\section{Functions Used in Spectral Forms of Objectives}
\numberwithin{lemma}{section}

\begin{lemma}\label{lem:apdx:funcs}
    The following functions give the welfare, covariance of neighbors, and covariance of a random pair of agents in the principal component basis. 
    
    \begin{enumerate}
        \item Welfare is given by $$\sum_{\ell=1}^n \zeta (\lambda_{\ell}) {\underline{f}}_{\ell}^2,$$ where $$\zeta ( \lambda) = -\beta (1 - \beta) \frac{(1 - \lambda) [2 - \beta(1 +\lambda)]}{(1 - \beta \lambda)^2}.$$
        \item Covariance of neighbors is given by $$\sum_{\ell=1}^n \eta (\lambda_{\ell}) {\underline{f}}_{\ell}^2,$$ where $$\eta ( \lambda) = {(1 - \beta)^2 \lambda \over (1 - \beta \lambda)^2 n}.$$
        \item Covariance of a random pair is given by $$\sum_{\ell=1}^n \nu (\lambda_{\ell}) {\underline{f}}_{\ell}^2,$$ where $$\nu ( \lambda) = {-(1 - \beta)^2 \over (1 - \beta \lambda)^2 n^2}.$$
    \end{enumerate}
\end{lemma}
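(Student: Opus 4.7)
The plan is to derive the three claimed spectral formulas by a direct calculation, starting from inner-product expressions for each quantity and then diagonalizing by rewriting everything in the principal component basis.

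For the welfare formula, I would start from the expression for $V^*$ obtained in the proof of Proposition \ref{prop:toyResults}. The cleanest form to work from is
\[
V^* \;=\; -(1-\beta)\,\|\bm{a}^*-\bm{f}\|^2 \;-\; 2\beta\,\langle \bm{a}^*, (\bm{I}-\bm{G})\bm{a}^*\rangle,
\]
where the second term comes from writing $\sum_{i,j} g_{ij}(a_i^*-a_j^*)^2 = 2\langle \bm{a}^*, (\bm{I}-\bm{G})\bm{a}^*\rangle$ using the row-stochasticity of $\bm{G}$. Then I would apply the change of basis $\underline{\bm{z}}=\bm{U}^\tr\bm{z}$ to both terms. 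The key simplifying identity is
\[
\underline{a}_\ell^* \;=\; \frac{1-\beta}{1-\beta\lambda_\ell}\,\underline{f}_\ell,
\]
which follows directly from $\bm{a}^* = (1-\beta)(\bm{I}-\beta\bm{G})^{-1}\bm{f}$ because $(\bm{I}-\beta\bm{G})^{-1}$ is diagonalized in the principal component basis. Using this, $\underline{a}_\ell^*-\underline{f}_\ell = -\beta(1-\lambda_\ell)/(1-\beta\lambda_\ell)\cdot \underline{f}_\ell$, so $\|\bm{a}^*-\bm{f}\|^2 = \sum_\ell \beta^2(1-\lambda_\ell)^2/(1-\beta\lambda_\ell)^2 \cdot \underline{f}_\ell^2$, and $\langle \bm{a}^*, (\bm{I}-\bm{G})\bm{a}^*\rangle = \sum_\ell (1-\lambda_\ell)(1-\beta)^2/(1-\beta\lambda_\ell)^2 \cdot \underline{f}_\ell^2$. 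Combining and factoring $\beta(1-\beta)(1-\lambda_\ell)/(1-\beta\lambda_\ell)^2$ out of the coefficient of each $\underline{f}_\ell^2$ leaves the bracketed factor $\beta(1-\lambda_\ell) + 2(1-\beta) = 2 - \beta(1+\lambda_\ell)$, delivering the claimed $\zeta(\lambda)$.

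For the covariance-of-neighbors formula, the starting point from the proof of Proposition \ref{prop:disagreementParadox} is $\tfrac{1}{n}\langle \bm{a}^*, \bm{G}\bm{a}^*\rangle$. In the principal component basis this is simply $\tfrac{1}{n}\sum_\ell \lambda_\ell (\underline{a}_\ell^*)^2$, and substituting the expression for $\underline{a}_\ell^*$ gives $\eta(\lambda) = (1-\beta)^2\lambda/[(1-\beta\lambda)^2 n]$ with no further algebra. For the covariance-of-a-random-pair formula, I would use the reduction in the same proof to $\langle \bm{a}^*, (\bm{P}_{(1)}-\bm{I})\bm{a}^*\rangle/n^2$, together with the fact that $\bm{P}_{(1)}\bm{a}^* = \bm{0}$ when $\bm{f}$ is mean-zero (inherited by $\bm{a}^*$ via the attenuation formula). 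This reduces the quantity to $-\|\bm{a}^*\|^2/n^2 = -\sum_\ell (1-\beta)^2/[(1-\beta\lambda_\ell)^2 n^2] \cdot \underline{f}_\ell^2$, giving the claimed $\nu(\lambda)$.

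The computation is essentially routine once one has the diagonalization identity for $\underline{a}_\ell^*$; the only real bookkeeping issue is the algebraic simplification of the welfare coefficient, where one must correctly combine two terms with a common factor of $(1-\lambda)/(1-\beta\lambda)^2$ and verify that the leftover bracket collapses to $2-\beta(1+\lambda)$. That is the step I would expect to be most error-prone, and I would double-check it by specializing to $\lambda=1$ (so that $\zeta(1)=0$, consistent with the fact that the constant principal component drops out of welfare) and to the limit $\beta\to 0$ (where welfare should simply be the negative sum-of-squares of $\bm{f}-\bm{f}=\bm{0}$ plus miscoordination, both vanishing, so $\zeta \to 0$).
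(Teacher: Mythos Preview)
Your proposal is correct and follows essentially the same approach as the paper: write each quantity as a quadratic form, diagonalize in the eigenbasis via $\underline{a}_\ell^* = (1-\beta)\underline{f}_\ell/(1-\beta\lambda_\ell)$, and read off the coefficient of $\underline{f}_\ell^2$. Your starting expression for welfare, $V^* = -(1-\beta)\|\bm{a}^*-\bm{f}\|^2 - 2\beta\langle \bm{a}^*, (\bm{I}-\bm{G})\bm{a}^*\rangle$, is a slightly cleaner (and equivalent) repackaging of the expanded inner-product form the paper quotes from the proof of Proposition~\ref{prop:toyResults}, and it makes the factoring step that yields $2-\beta(1+\lambda)$ more transparent.
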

\begin{proof}
    The welfare function is given by
    $$V^*=-(1-\beta) \left[\la \mb{f}, \mb{f} \ra + (1- \beta)\la (\mb{I} - \beta \mb{G})^{-1} \mb{f}, ((1+\beta)\mb{I} - 2\beta \mb{G})(\mb{I} - \beta \mb{G})^{-1} \mb{f} - 2 \mb{f} \ra \right].$$
    
    The covariance of neighbors is given by
    $${1 \over n} \la \mb{a}^* , \mb{Ga}^* \ra = {1 \over n} \la (1 - \beta) (\mb{I} - \beta \mb{G})^{-1} \mb{f}, (1 - \beta) \mb{G} (\mb{I} - \beta \mb{G})^{-1} \mb{f} \ra.$$
    
    The covariance of a random pair is given by
    $$-{1 \over n^2} \la \mb{a}^*, \mb{a}^* \ra = -{1 \over n^2} \la (1 - \beta) (\mb{I} - \beta \mb{G})^{-1} \mb{f}, (1 - \beta) (\mb{I} - \beta \mb{G})^{-1} \mb{f} \ra.$$
    
    The $\zeta$, $\eta$, and $\nu$ functions are then immediate by calculation.
\end{proof}

\end{document}